\documentclass[12pt,a4paper]{article}
\usepackage[margin=25mm]{geometry}
\usepackage{graphicx}
\usepackage[T1]{fontenc}
\usepackage{lmodern,amsmath,amssymb,amsthm,mathtools,booktabs,array,microtype}
\usepackage[numbers,sort&compress]{natbib}
\usepackage[hidelinks]{hyperref}
\newtheorem{theorem}{Theorem}
\newtheorem{corollary}{Corollary}
\newtheorem{proposition}{Proposition}
\theoremstyle{remark}
\newtheorem{example}{Example}
\newcommand{\R}{\mathbb R}
\newcommand{\Q}{\mathcal Q}
\newcommand{\ev}[1]{\langle #1\rangle}
\newcommand{\lr}{\le_{\mathrm{lr}}}
\newcommand{\doi}[1]{\href{https://doi.org/#1}{doi:\nolinkurl{#1}}}
\DeclareMathOperator{\tr}{tr}
\hypersetup{pdftitle={Beyond energy bounds: comparing Schrodinger eigenstates},
 pdfauthor={Xiao-Ke Wang, Han Jia, Shengjun Wu, Zeng-Bing Chen}}
\begin{document}
\title{Beyond energy bounds: comparing Schr\"odinger eigenstates}
\author{Xiao-Ke Wang, Han Jia, Shengjun Wu\thanks{Correspondence: \href{mailto:sjwu@nju.edu.cn}{sjwu@nju.edu.cn}.}
\ and Zeng-Bing Chen\thanks{Correspondence: \href{mailto:zbchen@nju.edu.cn}{zbchen@nju.edu.cn}.}\\[0.6em]
\small National Laboratory of Solid State Microstructures and School of Physics,\\
\small Collaborative Innovation Center of Advanced Microstructures,\\
\small Nanjing University, Nanjing 210093, China}
\date{}
\maketitle
\begin{abstract}
We derive eigenstate comparisons from ordered Schr\"odinger operators, extending energy bounds to overlaps, spatial distributions and nodes.
For three form-ordered operators with discrete comparison ground states, suppose the upper ground-state energy lies in the lower operator's ground-state gap.
Three comparison spectral values then bound all three squared ground-state overlaps and guarantee a simple discrete intermediate ground state.
Related gap estimates control low-energy subspaces and individual excited-state overlaps under additional gap conditions.
In one dimension and radially, a single crossing of the potential difference relative to the energy shift orders ground-state likelihood ratios and monotone observables; a nondecreasing difference suffices.
The results cover asymmetric wells and shortened right Dirichlet walls, with coordinatewise comparisons for submodular reference potentials and coordinatewise nondecreasing perturbations.
A two-sided Sturm argument orders corresponding internal nodes for nondecreasing potential differences, even between orthogonal one-dimensional excited states.
For Yukawa ground states, Hulth\'en comparisons bound overlaps, radial moments and the density at the nucleus; the latter two-sided bounds hold for screening $0<\delta<2/\sqrt6$.
Optimizing scaled Hulth\'en majorants gives a lower bound of approximately $1.13611$ for $n^2\delta_c(n,0)$, where $\delta_c(n,0)$ is the critical screening for the $n$th $s$-wave bound state.
This bound is optimal within that two-parameter comparison family.
\end{abstract}

\noindent\textbf{Keywords:} Schr\"odinger operators, comparison theorems, ground-state overlap,
ground-state distributions, excited-state nodes, bound states, Yukawa potential

\section{Introduction}
\label{sec:intro}
Two solvable potentials can bound the energies of a less tractable quantum system.
For $H_i=T+V_i$ with a common kinetic energy, $V_1\le V_2\le V_3$ orders the min--max values counted with multiplicity~\cite[section~11]{Simon2019Kato}.
We ask when a comparison ground state approximates the target, how spatial observables change, and where excited-state nodes move.

Energy-excess estimates control ground-state overlap~\cite{Eckart1930} and subspace leakage~\cite{VuLei2013}; perturbation theory bounds eigenspace rotation through a perturbation size and a spectral gap~\cite{DavisKahan1970,Seelmann2019}.
Spatial comparisons instead use structure in the potentials.
Wavefunction ratios and stochastic order give ground-state concentration for suitable reflection-symmetric perturbations~\cite{AHS1981,HerbstPitt1991}; Herbst and Pitt also develop a broader full-space diffusion framework.
For excited states, Sturm comparison controls zeros of one-dimensional differential equations~\cite{Simon2005}, and potential derivatives of nodes are known on finite intervals~\cite{GuoZhang2022}.

Our spectral estimates replace unknown target-state expectations by the comparison spectra.
The spatial results include asymmetric intervals, a shortened right wall and coordinatewise comparisons under submodularity; the even-potential mechanism is known~\cite{AHS1981,HerbstPitt1991}.
For excited states, we give a direct Sturm proof of node order and an example in which ordered nodes coexist with orthogonal states.

For Yukawa ground states, we build on Yukawa--Hulth\'en energy comparisons~\cite{Hall1985,Hall1992YH}, choosing one pair for energy and overlap bounds and another for radial moments and the density at the nucleus.
Optimizing the majorant's strength and screening gives the largest guaranteed binding range within that two-parameter family.

Table~\ref{tab:guide} summarizes the information supplied by the comparisons.
\begin{table}[t]
\caption{Inputs and outputs of the comparisons. The indicated sections give the full hypotheses.}
\label{tab:guide}
\centering
\begingroup
\fontsize{10}{12}\selectfont
\begin{tabular}{@{}>{\raggedright\arraybackslash}p{0.23\linewidth}>{\raggedright\arraybackslash}p{0.43\linewidth}>{\raggedright\arraybackslash}p{0.27\linewidth}@{}}
\toprule
Information sought & Additional input & Output and location\\
\midrule
Ground-state overlaps & Form order, two discrete comparison ground states and a positive margin from three spectral values & Three squared-overlap bounds; section~\ref{sec:overlaps}\\
Low-energy subspaces & Discrete low levels, exterior spectral sums and reference gaps & Total subspace leakage; section~\ref{sec:overlaps}\\
1D and radial observables & Nondecreasing difference, or single crossing relative to the ground-state energy shift & Likelihood-ratio order; section~\ref{sec:shape}\\
Coordinate observables & Submodular reference potential and coordinatewise nondecreasing perturbation & Coordinatewise comparison; section~\ref{sec:shape}\\
Internal nodes & Common interval, same bound-state index, nondecreasing difference & Corresponding node order; section~\ref{sec:nodes}\\
Yukawa application & Task-specific Hulth\'en comparison potentials & Explicit state bounds and optimized binding range; section~\ref{sec:yukawa}\\
\bottomrule
\end{tabular}
\endgroup
\end{table}

\section{Overlap bounds from comparison spectra}
\label{sec:overlaps}
\subsection{Setting and spectral notation}
Let $T=-\hbar^2\nabla^2/(2m)$.
On an open set $\Omega_i\subseteq\R^d$, impose Dirichlet conditions and define
\begin{equation}
 h_i[\phi]=\frac{\hbar^2}{2m}\int_{\Omega_i}|\nabla\phi|^2+
 \int_{\Omega_i}V_i|\phi|^2,
 \qquad
 \Q(h_i)=\{\phi\in H^1_0(\Omega_i):\int V_{i,+}|\phi|^2<\infty\}.
 \label{eq:form}
\end{equation}
Here $V_{i,\pm}=\max\{\pm V_i,0\}$. We assume $V_{i,+}\in L^1_{\mathrm{loc}}$ and that $V_{i,-}$ is relatively form bounded with respect to $T$ with relative bound less than one.
We call such potentials admissible. These assumptions define closed, lower-bounded forms and self-adjoint operators~\cite{ReedSimonII}.
Functions on smaller domains are extended by zero.
The order $h_1\le h_2\le h_3$ means
\begin{equation}
 \begin{aligned}
 &\Omega_3\subseteq\Omega_2\subseteq\Omega_1,\qquad
 \Q(h_3)\subseteq\Q(h_2)\subseteq\Q(h_1),\\
 &h_1[\phi]\le h_2[\phi]\le h_3[\phi]\qquad(\phi\in\Q(h_3)),
 \end{aligned}
 \label{eq:order}
\end{equation}
with the analogous pairwise inequalities on $\Q(h_2)$ and $\Q(h_3)$.
Pointwise ordering of the potentials on the nested domains implies this form order.

Write $\mu_n^{(i)}$ for the $n$th min--max value, $\Sigma_i=\inf\sigma_{\rm ess}(H_i)$ for the bottom of the essential spectrum, and $E_n^{(i)}=\mu_n^{(i)}$ when $\mu_n^{(i)}<\Sigma_i$.
Thus $E_n^{(i)}$ denotes an actual discrete eigenvalue, whereas $\mu_n^{(i)}$ can equal a continuum threshold.
For example, a three-dimensional well with exactly one negative eigenvalue and no zero-energy eigenstate has $E_1<0$ and $\mu_2=\Sigma=0$; the latter is a threshold rather than an eigenvalue.
The min--max principle gives $\mu_n^{(1)}\le\mu_n^{(2)}\le\mu_n^{(3)}$.
All eigenvalues are counted with multiplicity.
We denote normalized eigenstates by $\psi_n^{(i)}$, ground states by $\psi^{(i)}$, and their squared overlaps by
$F_{ij}=|\langle\psi^{(i)},\psi^{(j)}\rangle|^2$.

\subsection{Ground-state overlaps from three spectral values}
The first bound needs no spatial monotonicity: it uses the comparison ground-state energies and the lower operator's spectral gap.
Set $g_1=\mu_2^{(1)}-E_1^{(1)}>0$ and
\begin{equation}
 \kappa=\frac{\mu_2^{(1)}-E_1^{(3)}}{g_1}.
 \label{eq:kappa}
\end{equation}
\begin{theorem}\label{thm:overlap}
Suppose \eqref{eq:order} holds and $H_1,H_3$ have discrete ground states.
If $\kappa>0$, then $H_2$ has a simple discrete ground state, with
\begin{equation}
 \mu_2^{(2)}-E_1^{(2)}\ge\mu_2^{(1)}-E_1^{(3)},\qquad
 F_{12},F_{13},F_{23}\ge\kappa.
 \label{eq:alloverlaps}
\end{equation}
If all three discrete ground states exist without $\kappa>0$, the bounds $F_{12},F_{13}\ge\kappa$ still hold.
More precisely, whenever the intermediate discrete ground state exists,
\begin{equation}
 F_{12}\ge\frac{\mu_2^{(1)}-h_1[\psi^{(2)}]}{g_1}
 =\frac{\mu_2^{(1)}-E_1^{(2)}+(h_2-h_1)[\psi^{(2)}]}{g_1}\ge\kappa.
 \label{eq:refinedoverlap}
\end{equation}
The statements hold within any common reducing symmetry sector.
\end{theorem}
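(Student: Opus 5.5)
The basic tool is an Eckart-type inequality for $H_1$. Take any normalized $\phi\in\Q(h_1)$ and decompose it against the spectral projection $P$ of $H_1$ onto the ground state $\psi^{(1)}$. The part $(1-P)\phi$ lies in the spectral subspace $[\mu_2^{(1)},\infty)$, because $\mu_2^{(1)}$ is the bottom of $\sigma(H_1)\setminus\{E_1^{(1)}\}$ when $E_1^{(1)}$ is simple. If $E_1^{(1)}$ were degenerate, then $g_1=0$ and $\kappa$ would be undefined, so the hypothesis $g_1>0$ already gives simplicity. The spectral theorem, applied to the form, then gives
\begin{equation*}
h_1[\phi]\ge E_1^{(1)}|\langle\psi^{(1)},\phi\rangle|^2+\mu_2^{(1)}\bigl(1-|\langle\psi^{(1)},\phi\rangle|^2\bigr).
\end{equation*}
Rearranging, $|\langle\psi^{(1)},\phi\rangle|^2\ge(\mu_2^{(1)}-h_1[\phi])/g_1$.

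With this inequality the overlap bounds follow directly. Take $\phi=\psi^{(2)}$ when it exists. Then $h_1[\psi^{(2)}]=E_1^{(2)}-(h_2-h_1)[\psi^{(2)}]$, which gives the equality in \eqref{eq:refinedoverlap}. Since $(h_2-h_1)[\psi^{(2)}]\ge0$ and $E_1^{(2)}\le E_1^{(3)}$ by min--max, the right side is at least $\kappa$. Taking $\phi=\psi^{(3)}\in\Q(h_3)\subseteq\Q(h_1)$ gives $h_1[\psi^{(3)}]\le h_3[\psi^{(3)}]=E_1^{(3)}$, hence $F_{13}\ge\kappa$. Neither argument uses $\kappa>0$, which covers the unconditional clause.

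Now assume $\kappa>0$. For existence and simplicity of the intermediate ground state, min--max gives $\mu_1^{(2)}\le\mu_1^{(3)}=E_1^{(3)}<\mu_2^{(1)}\le\mu_2^{(2)}$. Since $\mu_2^{(2)}\le\Sigma_2$, we get $\mu_1^{(2)}<\Sigma_2$, so $\mu_1^{(2)}=E_1^{(2)}$ is a discrete eigenvalue. It is simple because $\mu_1^{(2)}<\mu_2^{(2)}$. The gap inequality follows from $\mu_2^{(2)}-E_1^{(2)}\ge\mu_2^{(1)}-E_1^{(3)}$, using $\mu_2^{(2)}\ge\mu_2^{(1)}$ and $E_1^{(2)}\le E_1^{(3)}$.

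Next comes $F_{23}$. Apply the same Eckart inequality to $H_2$, with $\phi=\psi^{(3)}\in\Q(h_2)$ and gap $g_2=\mu_2^{(2)}-E_1^{(2)}>0$:
\begin{equation*}
F_{23}\ge\frac{\mu_2^{(2)}-h_2[\psi^{(3)}]}{g_2}\ge\frac{\mu_2^{(2)}-E_1^{(3)}}{\mu_2^{(2)}-E_1^{(2)}}.
\end{equation*}
This is the step I expect to need care, because $g_2$ is not $g_1$. The function $x\mapsto(x-E_1^{(3)})/(x-E_1^{(2)})$ is nondecreasing in $x>E_1^{(2)}$, since $E_1^{(3)}\ge E_1^{(2)}$. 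So we may replace $\mu_2^{(2)}$ by the smaller value $\mu_2^{(1)}$, obtaining $(\mu_2^{(1)}-E_1^{(3)})/(\mu_2^{(1)}-E_1^{(2)})$. Then $E_1^{(2)}\ge E_1^{(1)}$ enlarges the denominator to $g_1$, while the numerator is positive, so the bound is at least $\kappa$. If $\mu_2^{(2)}=\infty$, then $F_{23}=1$ trivially.

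Finally, for a common reducing symmetry sector, restrict the forms and operators to that sector. The same argument applies verbatim with sector min--max values and sector ground states.
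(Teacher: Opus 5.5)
Your proposal is correct and follows the paper's proof essentially step by step. It uses the Eckart inequality for $h_1$ applied to $\psi^{(2)}$ and $\psi^{(3)}$, the min--max chain $\mu_1^{(2)}\le E_1^{(3)}<\mu_2^{(1)}\le\mu_2^{(2)}\le\Sigma_2$ for existence, simplicity and the gap, and the Eckart inequality for $H_2$ for $F_{23}$, where your two-step replacement ($\mu_2^{(2)}\to\mu_2^{(1)}$, then $E_1^{(2)}\to E_1^{(1)}$) simply spells out the monotonicity in the two unknown arguments that the paper states in one line.
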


\begin{proof}
First, if $\kappa>0$, min--max gives $\mu_1^{(2)}\le E_1^{(3)}<\mu_2^{(1)}\le\mu_2^{(2)}\le\Sigma_2$, proving existence, simplicity and the gap bound.
To bound the overlaps, spectral decomposition gives, for normalized $\phi\in\Q(h_1)$,
\begin{equation}
 h_1[\phi]\ge E_1^{(1)}|\langle\psi^{(1)},\phi\rangle|^2
 +\mu_2^{(1)}(1-|\langle\psi^{(1)},\phi\rangle|^2).
 \label{eq:eckart}
\end{equation}
For each existing ground state, $h_1[\psi^{(j)}]\le E_1^{(j)}\le E_1^{(3)}$ for $j=2,3$ gives $F_{12},F_{13}\ge\kappa$; retaining $h_1[\psi^{(2)}]$ gives \eqref{eq:refinedoverlap}.
When $\kappa>0$, applying \eqref{eq:eckart} to $H_2$ and $\phi=\psi^{(3)}$ yields
\begin{equation}
 F_{23}\ge\frac{\mu_2^{(2)}-E_1^{(3)}}{\mu_2^{(2)}-E_1^{(2)}}\ge\kappa,
 \label{eq:F23}
\end{equation}
because this fraction increases with each unknown argument $\mu_2^{(2)}$ and $E_1^{(2)}$ in the indicated range.
\end{proof}

The three inputs in \eqref{eq:alloverlaps} are $E_1^{(1)}$, $\mu_2^{(1)}$ and $E_1^{(3)}$; \eqref{eq:refinedoverlap} additionally uses $h_1[\psi^{(2)}]$.
Thus the energy-bracket width divided by the reference gap bounds the infidelity $1-F$.
For bounded perturbations with spectral width below a simple reference ground-state gap, Seelmann's estimate is quadratic at small width-to-gap ratios~\cite{Seelmann2019}; section~\ref{sec:yukawa} compares the bounds explicitly.
The form-order proof also covers unbounded perturbations, unequal domains and lower-bounded self-adjoint operators on an abstract Hilbert space.
If $\psi^{(1)}$ can be prepared, $F_{12}\ge\kappa>0$ means that ideal quantum phase estimation for $H_2$ returns its ground-state energy with probability at least $\kappa$.
This bounds the success probability; the measurement precision and simulation cost require separate analysis.

\begin{example}[Shrinking Dirichlet boxes]\label{ex:shrinkingboxes}
For zero potentials on $(0,L_i)$ with $(L_1,L_2,L_3)=(1,0.9,0.8)$, zero extension gives the form order and $\kappa=[4-(L_1/L_3)^2]/3=0.8125$.
The normalized ground states $\sqrt{2/L}\sin(\pi x/L)$ have
\[
 F(L,M)=\frac{4q\sin^2(\pi q)}{\pi^2(1-q^2)^2},\qquad q=M/L<1.
\]
Thus $(F_{12},F_{13},F_{23})\simeq(0.96485,0.86434,0.95672)$ all exceed $\kappa$.
The domain changes are not finite bounded additive perturbations on a common domain.
\end{example}

\subsection{Low-energy subspaces and excited-state overlaps}
The subspace estimate uses spectral sums and the gap above the selected levels.
Let $P_n^{(i)}$ project onto an ordered orthonormal set of the first $n$ discrete eigenstates.
Leakage is invariant under basis changes within fixed subspaces; if a cutoff splits a degenerate level, the bound holds for every choice of the truncated subspace.
Put
\begin{equation}
 S_n^{(i)}=\sum_{j=1}^n\mu_j^{(i)},\qquad
 \xi_n=\frac{S_n^{(3)}-S_n^{(1)}}{\mu_{n+1}^{(1)}-\mu_n^{(1)}},\qquad \xi_0=0,
 \label{eq:xi}
\end{equation}
whenever the denominator is positive.

\begin{theorem}\label{thm:subspace}
Suppose \eqref{eq:order} holds.
If the first $n$ eigenvalues of all three operators are discrete and $\mu_{n+1}^{(1)}>\mu_n^{(1)}$, then, writing $\vartheta_k$ for the principal angles between the two subspaces,
\begin{equation}
 n-\tr(P_n^{(2)}P_n^{(1)})=\sum_{k=1}^n\sin^2\vartheta_k\le\xi_n.
 \label{eq:subspace}
\end{equation}
For $n\ge2$, if also $\mu_{n-1}^{(1)}<\mu_n^{(1)}$, every normalized $n$th eigenstate of $H_2$ satisfies
\begin{equation}
 |\langle\psi_n^{(1)},\psi_n^{(2)}\rangle|^2\ge1-\xi_n-\xi_{n-1}.
 \label{eq:excitedoverlap}
\end{equation}
The statements hold within any common reducing symmetry sector.
\end{theorem}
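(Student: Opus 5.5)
We follow the Eckart-type argument of Theorem~\ref{thm:overlap}, but apply it to the trace of $H_1$ compressed to the $n$-dimensional range of $P_n^{(2)}$. Let $\{\phi_j\}_{j=1}^n$ be an orthonormal basis of $\operatorname{ran}P_n^{(2)}$ consisting of eigenstates of $H_2$; these lie in $\Q(h_2)\subseteq\Q(h_1)$. Write $A_1=P_n^{(1)}$ and decompose $H_1\ge H_1A_1+\mu_{n+1}^{(1)}(1-A_1)$ in the form sense. This is the spectral-theorem step, and it needs the first $n$ levels to be discrete, so that everything above them in $H_1$ sits at or above $\mu_{n+1}^{(1)}$.

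Summing over the basis gives
\[
\sum_j h_1[\phi_j]\ge \sum_{k\le n}\mu_k^{(1)}\langle\psi_k^{(1)},P_n^{(2)}\psi_k^{(1)}\rangle+\mu_{n+1}^{(1)}\bigl(n-\tr(P_n^{(2)}P_n^{(1)})\bigr).
\]
Set $w_k=\langle\psi_k^{(1)},P_n^{(2)}\psi_k^{(1)}\rangle\in[0,1]$. The first sum equals $S_n^{(1)}-\sum_k\mu_k^{(1)}(1-w_k)$, and since $\mu_k^{(1)}\le\mu_n^{(1)}$ this is at least $S_n^{(1)}-\mu_n^{(1)}\sum_k(1-w_k)$. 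Also $\sum_k(1-w_k)=n-\tr(P_n^{(2)}P_n^{(1)})=:L$. Hence
\[
\sum_j h_1[\phi_j]\ge S_n^{(1)}+(\mu_{n+1}^{(1)}-\mu_n^{(1)})L.
\]
For the upper bound, form order gives $\sum_j h_1[\phi_j]\le\sum_j h_2[\phi_j]=S_n^{(2)}\le S_n^{(3)}$, using the min--max ordering $\mu_j^{(2)}\le\mu_j^{(3)}$. Combining the two inequalities yields $L\le\xi_n$. The identity $L=\sum\sin^2\vartheta_k$ is standard: $\tr(P_n^{(2)}P_n^{(1)})=\sum\cos^2\vartheta_k$ for equal-dimensional subspaces. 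The whole argument is invariant under the choice of basis inside degenerate levels, as the statement notes, and it restricts verbatim to a common reducing sector.

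For the excited-state bound \eqref{eq:excitedoverlap}, let $Q_m=P_m^{(2)}$ and $R_m=P_m^{(1)}$, where the assumed gaps make $R_{n-1}$ and $R_n$ spectral projections. Decompose $\psi_n^{(2)}=Q_n\psi_n^{(2)}$. Its component orthogonal to $\operatorname{ran}R_n$ has squared norm $a=\|(1-R_n)\psi_n^{(2)}\|^2$, and its component inside $\operatorname{ran}R_{n-1}$ has squared norm $b=\|R_{n-1}\psi_n^{(2)}\|^2$. Because the gaps isolate $\mu_n^{(1)}$ as a simple level, $R_n-R_{n-1}$ is the rank-one projection onto $\psi_n^{(1)}$, so the overlap equals $1-a-b$. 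Two estimates remain.

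The first is $a\le\|(1-R_n)Q_n\|_{HS}^2=n-\tr(Q_nR_n)\le\xi_n$, which holds because $\psi_n^{(2)}$ is a unit vector in $\operatorname{ran}Q_n$. The second is $b$: since $\psi_n^{(2)}\perp\operatorname{ran}Q_{n-1}$, we have $b\le\|R_{n-1}(1-Q_{n-1})\|_{HS}^2=(n-1)-\tr(R_{n-1}Q_{n-1})\le\xi_{n-1}$, by the first part with $n-1$. That step requires $\mu_n^{(1)}>\mu_{n-1}^{(1)}$, which is assumed. The case $n-1$ also needs $\mu_{n-1}^{(2)}$ to separate $Q_{n-1}$ as an $H_2$-spectral subspace containing no component of $\psi_n^{(2)}$. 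If $E_{n-1}^{(2)}=E_n^{(2)}$ is degenerate, we instead choose $Q_{n-1}$ as the span of the other eigenvectors, which is permitted because the first part holds for every truncation.

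The main obstacle is this last step: keeping the bookkeeping consistent under $H_2$ degeneracy, and confirming that the first part really applies to an arbitrary truncated eigenbasis. It does, because the first-part argument only used that the $\phi_j$ are eigenvectors with eigenvalues $\mu_1^{(2)},\dots,\mu_n^{(2)}$.
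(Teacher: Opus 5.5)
Your proposal is correct and follows the paper's proof in Appendix~\ref{app:overlap} essentially step for step. Both arguments use the same spectral lower bound for $\sum_j h_1[\phi_j]$, the same replacement of $\sum_k\mu_k^{(1)}(1-w_k)$ by $\mu_n^{(1)}\bigl(n-\tr(P_n^{(2)}P_n^{(1)})\bigr)$ and the same form-order upper bound by $S_n^{(3)}$. For \eqref{eq:excitedoverlap}, both choose an ordered $H_2$ eigenbasis with $\psi_n^{(2)}\in\operatorname{Ran}P_n^{(2)}$ and $\psi_n^{(2)}\perp\operatorname{Ran}P_{n-1}^{(2)}$, and then apply the two trace estimates.

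The only bookkeeping you leave implicit is the case $E_n^{(2)}=E_{n+1}^{(2)}$. There the truncation $Q_n$ must also be chosen to contain $\psi_n^{(2)}$, which your remark that the first part holds for every truncation already covers.
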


The underlying energy-excess estimate is the variational subspace bound expressed by Vu and Lei's curvature lemma~\cite[lemma~4.2]{VuLei2013}.
Taking $(\mu_{n+1}^{(1)}-H_1)_+$ reduces it to a finite-rank estimate even at a continuum threshold; form order then replaces the unknown target expectation by comparison spectral sums.
Appendix~\ref{app:overlap} gives a direct proof and obtains \eqref{eq:excitedoverlap} by splitting the components below and above the reference state.

The bound \eqref{eq:excitedoverlap} rules out orthogonality between the two excited states when its right-hand side is positive.
Ranking each different energy only once need not preserve the min--max order (appendix~\ref{app:distinct}).

\section{Ground-state distribution order from potential differences}
\label{sec:shape}
The shape of the potential difference can bound a whole class of spatial observables at once.
\subsection{Single crossing, likelihood-ratio order and observables}
In this section $\psi_i=\psi^{(i)}>0$ denotes the ground state of system $i$, not its $i$th eigenstate.
For one-dimensional, even and radial problems, let $p_i$ denote the normalized density of $x$, $|x|$ and $r$, respectively.
For a radial profile $\phi_i$ in $d$ dimensions, $p_i(r)=|S^{d-1}|r^{d-1}\phi_i(r)^2$.
Thus $p_2/p_1=(\psi_2/\psi_1)^2$ in the relevant variable.
The notation $p_2\lr p_1$ means that this ratio is nonincreasing.
This likelihood-ratio order implies $\ev{f}_2\le\ev{f}_1$ for every nondecreasing $f$ for which the expectations exist.

\begin{theorem}[Single-crossing comparison]\label{thm:ratio}
Let $V_1$ be admissible on $I=(a,b)$, and $V_2=V_1+D$ admissible on $(a,c)$ with $a<c\le b$, imposing Dirichlet conditions and extending $\psi_2$ by zero on $[c,b)$.
Suppose both discrete ground states exist.
If $D-(E_1^{(2)}-E_1^{(1)})$ changes sign at most once, from negative to positive, then $\psi_2/\psi_1$ is nonincreasing.
In particular this holds when $D$ is nondecreasing, irrespective of its sign.
For even potentials on intervals $(-R_i,R_i)$ with $0<R_2\le R_1\le\infty$, the same conclusion holds in $|x|$.
For central potentials on concentric balls of radii $R_i$ with the same nesting, it holds in $r$ in every dimension $d\ge2$.
In the even and radial cases, the potentials are admissible, both discrete ground states exist, and the single-crossing condition is imposed on $(0,R_2)$.
The ratio is constant only for equal domains and an almost everywhere constant $D$.
\end{theorem}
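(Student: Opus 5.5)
The plan is a Wronskian (Sturm–Picone) argument. Set $\Delta=E_1^{(2)}-E_1^{(1)}$ and $W=\psi_1'\psi_2-\psi_1\psi_2'$ on $(a,c)$, in units with $\hbar^2/(2m)=1$. Both ground states are positive on their domains. Subtracting the two eigenvalue equations gives, in the distributional sense,
\[
 W'=\psi_1''\psi_2-\psi_1\psi_2''=(D-\Delta)\,\psi_1\psi_2 .
\]
Since $(\psi_2/\psi_1)'=-W/\psi_1^2$, it suffices to show $W\ge0$ on $(a,c)$.

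To make this rigorous for admissible potentials, I will first note that $\psi_i\in H^1$ solves $-\psi_i''+V_i\psi_i=E\psi_i$ with $V_i\in L^1_{\rm loc}$. Hence $\psi_i'$ is locally absolutely continuous and $W$ is locally absolutely continuous. Positivity of $\psi_i$ follows from the ground state being simple and positive (Harnack or a standard ODE argument).

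\textbf{Boundary values.}
\begin{itemize}
\item At $x\to a$: both states satisfy the same boundary behaviour, so $W\to0$. At a finite endpoint both vanish with finite derivatives. At an infinite endpoint I would use $\liminf|W|=0$, which follows from $\psi_i,\psi_i'\in L^2$ by integrating $|W|$.
\item At $x\to c$: if $c<b$, then $\psi_2(c)=0$ and $\psi_1(c)>0$, so $W(c^-)=-\psi_1(c)\psi_2'(c)\ge0$. If $c=b$, then $W\to0$ as before.
\end{itemize}

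\textbf{Single-crossing step.} Let $x_0$ be the crossing point. On $(a,x_0)$ we have $W'\le0$ and $W(a^+)=0$, so $W\le0$ there. On $(x_0,c)$ we have $W'\ge0$ and $W(c^-)\ge0$, so $W\ge0$ there... which has the wrong sign on the left piece. So I must recheck the sign convention. Writing $u=\psi_2/\psi_1$ gives $(\psi_1^2u')'=\psi_1\psi_2(D-\Delta)$. Put $G=\psi_1^2u'=-W$. Then $G(a)=0$ and $G'\le0$ first, so $G$ decreases to negative values. After $x_0$, $G'\ge0$, and at the right end $G(c^-)=\psi_1\psi_2'(c)\le0$ or $G\to0$. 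Hence on the right piece $G$ is increasing towards a nonpositive limit, so $G\le0$ there as well. Thus $G\le0$ everywhere and $u$ is nonincreasing. (This corrects my slip above: the sign of $W'$ is $-(D-\Delta)\psi_1\psi_2$.)

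The existence of a crossing is automatic. Integrating against $\psi_1\psi_2$ gives $\int(D-\Delta)\psi_1\psi_2=G(c^-)-G(a^+)\le0$ when $c=b$ (it equals $0$ in that case). So the sign change cannot be absent unless $D-\Delta$ is identically zero a.e. If $D$ is nondecreasing, then $D-\Delta$ is nondecreasing and hence changes sign at most once from negative to positive, whatever the value of $\Delta$.

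\textbf{Even and radial cases.} For even potentials the ground states are even. I apply the argument on $(0,R_i)$ with the Neumann condition $\psi_i'(0)=0$, which gives $G(0)=0$. In the radial case I use $\phi_i(r)$ with the operator $-r^{1-d}(r^{d-1}\phi')'$. The identity becomes $(r^{d-1}\phi_1^2u')'=r^{d-1}\phi_1\phi_2(D-\Delta)$. At $r=0$, $r^{d-1}\phi_1^2u'\to0$, because regular ground states have bounded $\phi_i$ and $r^{d-1}\phi_i'\to0$ for $d\ge2$. The remaining steps are identical, and the ratio $\psi_2/\psi_1$ in the radial variable equals $\sqrt{p_2/p_1}$ up to a constant.

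\textbf{Equality case.} If $u$ is constant, then $G\equiv0$. This forces $(D-\Delta)\psi_1\psi_2=0$ a.e., so $D$ is a.e. constant. It also forces $\psi_2'(c)=0$ if $c<b$, which contradicts $\psi_2(c)=0$ by ODE uniqueness; hence the domains are equal.

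\textbf{Main obstacle.} The delicate step is justifying the boundary limits of $G$ under mere admissibility: singular endpoints, infinite intervals, and the radial origin with a possibly singular $V$. My first attempt would use $L^2$-integrability of $\psi_i$ and $\psi_i'$ together with monotonicity of $G$ on each piece, so that the limits exist, and then show that they vanish along a sequence.
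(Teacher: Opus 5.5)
Your route is the paper's own: your $G=\psi_1^2u'$ is its $w=\psi_1\psi_2'-\psi_1'\psi_2$, with $G'=(D-\Delta)\psi_1\psi_2$. So $G$ decreases and then increases, and everything reduces to showing that the endpoint limits of $G$ are nonpositive; these limits exist by that monotonicity. Under the stated admissibility hypothesis, however, that endpoint step carries the real work, and what you propose for it fails at finite endpoints and at the radial origin.

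\textbf{Finite endpoints.} Your claim that both states ``vanish with finite derivatives'' is false in this class. With $\hbar^2/(2m)=1$, the potential $V=-\gamma(x-a)^{-2}$ with $0<\gamma<1/4$ is admissible by Hardy's inequality. Its ground state behaves like $(x-a)^s$ with $s=(1+\sqrt{1-4\gamma})/2\in(1/2,1)$, so $\psi'\to\infty$.

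\textbf{Radial origin.} The claim ``regular ground states have bounded $\phi_i$'' fails too. For $d\ge3$ and $V=-\gamma r^{-2}$ with $0<\gamma<(d-2)^2/4$, one has $\phi\sim r^s$ with $s=[-(d-2)+\sqrt{(d-2)^2-4\gamma}]/2<0$.

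\textbf{Why the fallback is not enough.} $L^2$-integrability of $\psi_i$ and $\psi_i'$ only gives $G\in L^1$ near the endpoint. That forces a zero limit at an infinite endpoint, but it is compatible with a nonzero limit at a finite endpoint or at $r=0$.

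\textbf{Missing ingredient for finite endpoints and $d\ge3$.} Use Hardy's inequality. For $\psi_i\in H^1_0$ one has $\psi_i/(x-e)\in L^2$ near a finite Dirichlet endpoint $e$. Cauchy--Schwarz then gives $\int|G|\,\mathrm{d}x/|x-e|<\infty$ there. Since $1/|x-e|$ is not integrable and $G$ has a limit, that limit must be $0$. The same argument with the weight $r^{d-1}$ handles the origin for $d\ge3$.

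\textbf{The case $d=2$.} Hardy's inequality fails here, so you need a separate logarithmic estimate. Set $\eta(r_0)^2=\max_i\int_0^{r_0}|\phi_i'|^2r\,\mathrm{d}r$, which tends to $0$. Cauchy--Schwarz gives $|\phi_i(r)|\le|\phi_i(r_0)|+\eta(r_0)\sqrt{\log(r_0/r)}$. If $|G|\ge\gamma>0$ near zero, then
\[
\gamma\log(r_0/\varepsilon)\le\int_\varepsilon^{r_0}|G|\,\mathrm{d}r/r\lesssim\eta(r_0)^2\log(r_0/\varepsilon)
\]
as $\varepsilon\to0$, which is impossible for small $r_0$.

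\textbf{Smaller points.} At a shortened endpoint $c<b$, $V_2$ may be singular, so do not assume $\psi_2'(c)$ exists. Instead:
\begin{itemize}
\item If $D-\Delta\ge0$ near $c$, then $G$ is monotone there and $G(c^-)$ exists.
\item Since $\psi_1'$ is continuous at the interior point $c$ of $(a,b)$, $\psi_1'\psi_2\to0$.
\item A positive (or infinite) limit of $G$ would then make $\psi_2$ increase to $0$, contradicting positivity. This is the paper's argument.
\end{itemize}
In the equality case, argue directly that $\psi_2=k\psi_1$ with $\psi_1(c)>0$ cannot tend to $0$ at $c$, rather than invoking ODE uniqueness at a possibly singular point. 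Your paragraph on the ``automatic existence of a crossing'' is not needed and can be dropped. With these endpoint arguments supplied, the rest of your bookkeeping matches the paper: the sign analysis, $G(0)=0$ in the even case, and the equality case.
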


The proof uses $w=\psi_1\psi_2'-\psi_1'\psi_2$:
\begin{equation}
 w'=\frac{2m}{\hbar^2}\psi_1\psi_2(D-\Delta E),\qquad
 (\psi_2/\psi_1)'=w/\psi_1^2,\qquad \Delta E=E_1^{(2)}-E_1^{(1)}.
 \label{eq:wronskian}
\end{equation}
The single crossing makes $w$ decrease and then increase; its endpoint limits force $w\le0$.
Appendix~\ref{app:ratio} treats the endpoints and the radial origin.
For even potentials, this argument already appears in A~Martin's proof reproduced in~\cite{AHS1981}; the ratio formulation is also given in~\cite{HerbstPitt1991}.
The extension here includes asymmetric intervals, a shortened right endpoint and radial reference potentials without an additional curvature condition.
For smooth potentials on the line, the asymmetric case is implicit in the diffusion argument of~\cite{HerbstPitt1991}, section~10.1.
The monotonicity condition on $D$ can be checked directly; applying the more general single-crossing criterion requires information about $\Delta E$.

Applying the theorem twice gives
\begin{equation}
 p_3\lr p_2\lr p_1,\qquad
 \ev{f}_3\le\ev{f}_2\le\ev{f}_1,
 \label{eq:between}
\end{equation}
when $V_2-V_1$ and $V_3-V_2$ are both nondecreasing in the relevant variable.
Both inequalities reverse if both differences are nonincreasing, on a common domain.
For radial problems, this bounds every positive moment of the intermediate ground state by the two comparison states.
The direction of the differences matters: for
$V_1=x^2/2$, $V_2=V_1+a \mathrm{e}^{-x^2/s^2}$ and $V_3=V_1+a$ with $a,s>0$, the outside ground states coincide, but the middle state has larger strictly increasing moments of $|x|$.
Indeed $V_1-V_2$ increases in $|x|$, so theorem~\ref{thm:ratio} orders the reversed pair.

In the tilted box of figure~\ref{fig:distribution-nodes}(a), increasing the tilt does not decrease the cumulative probability to the left of any point; it need not order the probability densities pointwise.

\subsection{Coordinatewise comparison without reflection symmetry}
For $x,y\in\R^d$, let $(x\vee y)_j=\max(x_j,y_j)$ and $(x\wedge y)_j=\min(x_j,y_j)$.
A potential is submodular if
\begin{equation}
 V(x\vee y)+V(x\wedge y)\le V(x)+V(y)\qquad\text{for all }x,y.
 \label{eq:submodular}
\end{equation}
For smooth $V$, this is equivalent to $\partial_j\partial_kV\le0$ for $j\ne k$.
A positive function $u$ is multivariate totally positive of order two (MTP$_2$) if the reversed multiplicative inequality holds:
$u(x\vee y)u(x\wedge y)\ge u(x)u(y)$~\cite{KarlinRinott1980}.

\begin{theorem}\label{thm:lattice}
Let $\Omega=\R^d$ or an open box with Dirichlet conditions.
Suppose $V_1,D$ are continuous and bounded below, and $H_1=T+V_1$, $H_2=H_1+D$ have ground states below their essential spectra.
If $V_1$ is submodular and $D$ is coordinatewise nondecreasing, then $\psi_1$ is MTP$_2$ and $\psi_2/\psi_1$ is coordinatewise nonincreasing.
Their densities satisfy
\begin{equation}
 p_2(x)p_1(y)\le p_2(x\wedge y)p_1(x\vee y),\qquad
 \ev{A}_2\le\ev{A}_1
 \label{eq:lattice}
\end{equation}
for every bounded coordinatewise nondecreasing observable $A$.
\end{theorem}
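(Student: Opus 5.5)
The plan is to work through the heat semigroup (Feynman--Kac) and lattice-preserving properties, with the FKG/Holley inequality as the final step. The conclusion has three parts: $\psi_1$ is MTP$_2$, the ratio $\psi_2/\psi_1$ is coordinatewise nonincreasing, and the observables are ordered. I will obtain all three from MTP$_2$ properties of the semigroup kernels, combined with Holley's criterion in its continuous form (Karlin--Rinott~\cite{KarlinRinott1980}).

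\emph{Step 1: kernel MTP$_2$.} First discretize. On $\R^d$, or on a box with Dirichlet walls, the free heat kernel $e^{-tT}(x,y)$ is a product of one-dimensional kernels. The one-dimensional Gaussian kernel is totally positive of order two in $(x_j,y_j)$, and so is the Dirichlet interval kernel (killed Brownian motion preserves TP$_2$, since nonintersecting paths give the Karlin--McGregor determinant formula). Hence the full kernel is MTP$_2$ in $(x,y)\in\R^{2d}$ under the product order with $y$ unreversed. Submodularity of $V_1$ makes $e^{-tV_1(x)}$ MTP$_2$ in $x$. By Trotter, $e^{-tH_1}=\lim\,(e^{-tT/N}e^{-tV_1/N})^N$. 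MTP$_2$ is preserved under products of MTP$_2$ functions and under integrating out variables (the Karlin--Rinott composition theorem), so $e^{-tH_1}(x,y)$ is MTP$_2$. For continuous $V_1$ bounded below, Trotter convergence holds in strong sense; I will pass to pointwise kernel limits by continuity of kernels, or alternatively approximate $V_1$ by bounded continuous submodular truncations.

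\emph{Step 2: ground states.} The ground state is discrete and simple and lies below the essential spectrum. Therefore $e^{tE_1}e^{-tH_1}\phi\to\langle\psi_1,\phi\rangle\psi_1$ as $t\to\infty$. Apply this to a positive MTP$_2$ seed $\phi$, for example a Gaussian or a product bump, so that $e^{-tH_1}\phi$ is MTP$_2$ by composition. MTP$_2$ is a closed condition under pointwise limits of positive functions, and $\psi_1>0$ is continuous, so $\psi_1$ is MTP$_2$.

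\emph{Step 3: the ratio, the main obstacle.} I need to compare two different operators. Doing this through the Wronskian is impossible in $d>1$, so I would show the joint inequality
\[
 u_2(x)u_1(y)\le u_2(x\wedge y)u_1(x\vee y)
\]
for $u_i=e^{-tH_i}\phi$ and then pass to the limit. At the Trotter level, the potential factors satisfy
\[
 e^{-sV_2(x)}e^{-sV_1(y)}\le e^{-sV_2(x\wedge y)}e^{-sV_1(x\vee y)}.
\]
This holds because $V_2(x)+V_1(y)\ge V_1(x\wedge y)+V_1(x\vee y)+D(x)$ by submodularity, and $D(x)\ge D(x\wedge y)$ by coordinatewise monotonicity. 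The kinetic factor is a common MTP$_2$ kernel. A "two-function" version of the composition theorem then propagates the Holley-type inequality through each step: if $K$ is MTP$_2$ and $f,g$ satisfy $f(x)g(y)\le f(x\wedge y)g(x\vee y)$, then $Kf$ and $Kg$ satisfy the same inequality. This is the Ahlswede--Daykin four-functions theorem applied on the product space. After normalizing and letting $t\to\infty$, the inequality becomes
\[
 \psi_2(x)\psi_1(y)\le\psi_2(x\wedge y)\psi_1(x\vee y).
\]
Taking $y\ge x$ gives $\psi_2(x)/\psi_1(x)\ge\psi_2(y)/\psi_1(y)$, so the ratio is coordinatewise nonincreasing, and squaring gives the density inequality in \eqref{eq:lattice}. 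The delicate points in this step are the uniform justification of the Trotter and $t\to\infty$ limits when $D$ is unbounded above, which I handle by truncating $D$ at level $M$ and using monotone form convergence of ground states, and the positivity needed for the four-functions theorem.

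\emph{Step 4: observables.} Since $p_1$ is MTP$_2$ and the pair satisfies the Holley inequality, Holley's theorem gives $p_2\le p_1$ in the stochastic (upper-set) order. Consequently $\ev{A}_2\le\ev{A}_1$ for every bounded coordinatewise nondecreasing $A$.
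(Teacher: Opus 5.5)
Your proposal is correct, and it reaches the result by a different route from the paper's. The paper builds the full discretized path density $\Phi_i$ with the symmetric splitting $K_i=\mathrm{e}^{-\tau V_i/2}\mathrm{e}^{-\tau T}\mathrm{e}^{-\tau V_i/2}$, so that the midpoint marginal is $u_i^2$. It then conditions on the midpoint and uses MTP$_2$ of $\Phi_1$ to make the conditional laws stochastically increasing in the conditioning point. Integrating the coordinatewise decreasing factor $G=\Phi_2/\Phi_1$ against these laws gives ratio monotonicity, but only for comparable points. The Holley inequality for $p_2,p_1$ on all pairs is recovered afterwards by combining this with MTP$_2$ of $\psi_1$.

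You instead propagate the two-function inequality $f(x)g(y)\le f(x\wedge y)g(x\vee y)$ through every potential multiplication and every kernel integration, using the four-functions theorem. Your checks of both steps are right: for the potential step, submodularity plus $D(x)\ge D(x\wedge y)$; for the kernel step, joint MTP$_2$ of the kernel on $\R^{2d}$. This has three advantages:
\begin{itemize}
\item It produces the Holley inequality for $(\psi_2,\psi_1)$ on all pairs directly.
\item It works with Kato's unsymmetrized product $(\mathrm{e}^{-\tau T}\mathrm{e}^{-\tau V_i})^N$, so you avoid the paper's comparison between $K_i^N$ and $Q_N$.
\item MTP$_2$ of $\psi_1$ is no longer needed for the ratio claim; it is just the case $D=0$.
\end{itemize}
The paper's route instead stays within standard MTP$_2$ closure properties and conditional stochastic monotonicity. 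The limit passages are the same in both arguments. You should make the almost-everywhere step explicit as the paper does: the meet--join map permutes coordinates on each order region, so it preserves null sets. The rest is continuity and $\langle g,\psi_i\rangle>0$.

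There are two small corrections.
\begin{itemize}
\item Setting $y\ge x$ in your final inequality makes it an identity. You need the larger point in the $\psi_2$ slot, i.e.\ $x\ge y$, which gives $\psi_2(x)\psi_1(y)\le\psi_2(y)\psi_1(x)$.
\item No truncation is needed. After shifting so that $V_1,D\ge0$, the factors $\mathrm{e}^{-\tau V_i}$ are contractions, and Kato's product formula for form sums handles potentials that are unbounded above. Your fallback of truncating $V_1$ would actually be unsafe, because $\min(V_1,M)$ need not be submodular. For example, with $V_1=(x_1-x_2)^2$, $M=1$, $z=(2,0)$ and $z'=(1,1)$, the truncated values at $z\vee z'$ and $z\wedge z'$ sum to $2>1$.
\end{itemize}
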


Herbst and Pitt's full-space diffusion theory supplies stochastic monotonicity, while their Schr\"odinger applications use smooth reflection-symmetric potentials~\cite[theorems~1.1, 1.9--1.10 and~5.3]{HerbstPitt1991}.
The proof in appendix~\ref{app:lattice} applies positive-correlation inequalities to discretized paths and projects onto the ground state, treating continuous potentials and Dirichlet boxes directly.
The present assumption is submodularity on the full domain; a general radial potential need not satisfy it.

Without submodularity, a coordinatewise increasing perturbation can shift another coordinate in the opposite direction.
Take $V_1=(x^2+y^2)/2+cxy$, $0<c<1$, and $D=t(1+\tanh x)/2$, $t>0$.
Stationarity gives $\ev{\partial_xV_2}_2=\ev{\partial_yV_2}_2=0$, hence
$\ev{y}_2=tc\ev{(1+\tanh x)'/2}_2/(1-c^2)>0=\ev{y}_1$.
Truncating $y$ at $\pm M$ gives a bounded increasing observable that violates \eqref{eq:lattice} for large $M$.

\section{Node ordering and the limits of excited-state comparison}
\label{sec:nodes}
The $n$th one-dimensional bound state has exactly $n-1$ internal nodes~\cite{Simon2005}: zeros inside the interval, excluding Dirichlet endpoints.
We compare their positions using Sturm comparison.

\begin{theorem}\label{thm:nodes}
Let $V_1$ and $V_2=V_1+D$ be admissible on the same interval $I=(a,b)$, with $D$ nondecreasing, and suppose $E_n^{(i)}<\Sigma_i$.
Writing the nodes in increasing order gives
\begin{equation}
 z_k^{(2)}\le z_k^{(1)},\qquad k=1,\ldots,n-1.
 \label{eq:nodes}
\end{equation}
If also $D$ is bounded and the $n$th state stays below the essential spectrum along $H(t)=H_1+tD$, $0\le t\le1$, then
\begin{equation}
 \frac{\mathrm{d}z_k}{\mathrm{d}t}=
 \frac{2m}{\hbar^2\psi_t'(z_k)^2}
 \int_a^{z_k}\big(D-\ev{D}_{\psi_t}\big)\psi_t^2\,\mathrm{d}x\le0,
 \label{eq:velocity}
\end{equation}
where $\psi_t$ is the normalized real $n$th eigenfunction.
\end{theorem}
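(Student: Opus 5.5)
The plan is a two-sided Sturm argument applied to each nodal sub-interval, using the Wronskian identity \eqref{eq:wronskian} with the $n$th eigenfunctions in place of ground states. Write $\psi_1,\psi_2$ for real $n$th eigenfunctions, $\Delta E=E_n^{(2)}-E_n^{(1)}$, and $w=\psi_1\psi_2'-\psi_1'\psi_2$, so that $w'=\frac{2m}{\hbar^2}\psi_1\psi_2(D-\Delta E)$.

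The key observation is that a nondecreasing $D$ makes $D-\Delta E$ change sign at most once, from negative to positive, say at a point $x_*$. Put $z_0^{(i)}=a$ and $z_n^{(i)}=b$.

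Suppose, for contradiction, that $z_k^{(2)}>z_k^{(1)}$ for some $k$, and take the smallest such $k$. Minimality gives $z_{k-1}^{(2)}\le z_{k-1}^{(1)}<z_k^{(1)}<z_k^{(2)}$. Hence $\psi_2$ has no zero on $J=(z_{k-1}^{(1)},z_k^{(1)})$. After fixing signs, assume $\psi_1,\psi_2>0$ on $J$.

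\begin{itemize}
\item At the right endpoint $z_k^{(1)}$ we have $\psi_1=0$, $\psi_1'<0$ and $\psi_2>0$, so $w(z_k^{(1)})>0$.
\item At the left endpoint $z_{k-1}^{(1)}$ we have $\psi_1=0$ and $\psi_1'>0$. Also $\psi_2\ge0$ there, since $\psi_2$ is positive on $(z_{k-1}^{(2)},z_k^{(2)})\supseteq J$ and is continuous. Hence $w(z_{k-1}^{(1)})=-\psi_1'\psi_2\le0$.
\item When $k=1$ the left endpoint is $a$, and the endpoint limit $w\to0$ from appendix~\ref{app:ratio} replaces this.
\end{itemize}

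So $w$ rises from $\le0$ to $>0$ across $J$. By the single-crossing structure, $w$ is decreasing on $J$ left of $x_*$ and increasing right of $x_*$. This alone is not a contradiction, so I will make the argument two-sided by running the same comparison on the last differing interval.

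Take the largest $k$ with $z_k^{(2)}>z_k^{(1)}$. Then $z_k^{(1)}<z_k^{(2)}\le z_{k+1}^{(2)}\le z_{k+1}^{(1)}$, because $z_{k+1}^{(2)}\le z_{k+1}^{(1)}$ by maximality, with $z_n=b$ at the end. On $J'=(z_k^{(2)},z_{k+1}^{(2)})\subseteq(z_k^{(1)},z_{k+1}^{(1)})$, $\psi_1$ has no zero. Normalizing both to be positive on $J'$, the boundary values are $w(z_k^{(2)})=\psi_1\psi_2'>0$ and $w(z_{k+1}^{(2)})=\psi_1\psi_2'\le0$. So $w$ falls across $J'$, which forces $x_*$ to lie in or beyond the decreasing part, that is $x_*>z_k^{(2)}$ region-wise.

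Combining the two intervals is the main obstacle. I would first count nodes: the leftmost violating index and the rightmost one bracket an interval $(z_{k-1}^{(1)},z_{k'+1}^{(2)})$. On this interval the two eigenfunctions have interlaced node counts and the sign pattern of $w$ at consecutive nodes must alternate compatibly. With a single sign change of $D-\Delta E$, this yields a contradiction via the integrated identity $w(\beta)-w(\alpha)=\frac{2m}{\hbar^2}\int_\alpha^\beta\psi_1\psi_2(D-\Delta E)$ on the leftmost and rightmost cells.

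If that becomes messy, my fallback is the homotopy $H(t)=H_1+tD$ when $D$ is bounded, together with the velocity formula \eqref{eq:velocity}, followed by an approximation that truncates $D$. For the velocity formula:

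\begin{itemize}
\item Differentiate $\psi_t(z_k(t))=0$ to get $z_k'=-\dot\psi_t(z_k)/\psi_t'(z_k)$.
\item Use $(H(t)-E)\dot\psi=(\ev{D}-D)\psi$ and the Wronskian of $\psi_t$ and $\dot\psi_t$ on $(a,z_k)$. Boundary terms vanish at $a$ by decay, which gives the stated integral.
\item Its sign is $\le0$: for nondecreasing $D$, the weighted average of $D$ over $(a,z_k)$ under the probability measure $\psi_t^2$ is at most the full average $\ev{D}_{\psi_t}$, because mass to the left carries smaller $D$. This is a Chebyshev-type inequality and is immediate.
\end{itemize}

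Integrating in $t$ then gives \eqref{eq:nodes} for bounded $D$. General $D$ would follow by monotone truncation together with continuity of eigenfunctions under strong resolvent convergence.
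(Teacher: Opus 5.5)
\paragraph{The gap: the Sturm argument is not finished.} Your proof of \eqref{eq:nodes} stops at the step you call the main obstacle. The plan for that step (``interlaced node counts'' and a sign pattern that ``must alternate compatibly'') is not an argument. Yet your two cell computations already give the contradiction; you did not draw the right conclusion from the first one. Let $k$ be the smallest and $k'$ the largest index $j$ with $z_j^{(2)}>z_j^{(1)}$.
\begin{itemize}
\item On $J=(z_{k-1}^{(1)},z_k^{(1)})$ you have $\psi_1\psi_2>0$. If $z_k^{(1)}\le x_*$, then $w'\le0$ on all of $J$, which cannot happen while $w$ rises from $\le0$ to $>0$. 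Hence $x_*<z_k^{(1)}$.
\item On $J'=(z_{k'}^{(2)},z_{k'+1}^{(2)})$, if $z_{k'}^{(2)}\ge x_*$ then $w$ is nondecreasing, which cannot happen while $w$ falls from $>0$ to $\le0$. Hence $z_{k'}^{(2)}<x_*$.
\item Since $k\le k'$, this gives $x_*<z_k^{(1)}\le z_{k'}^{(1)}<z_{k'}^{(2)}<x_*$, a contradiction.
\end{itemize}
With this line added, your argument is a valid variant of the paper's two-sided Sturm proof. The paper fixes an arbitrary $k$ and splits on the position of $x_*$. If $z_k^{(1)}\le x_*$, each of the $k$ cells $(z_{j-1}^{(1)},z_j^{(1)}]$ contains a zero of $\psi_2$. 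If $z_k^{(2)}\ge x_*$, each of the $n-k$ cells $[z_j^{(2)},z_{j+1}^{(2)})$ contains a zero of $\psi_1$. Otherwise $z_k^{(2)}<x_*<z_k^{(1)}$ directly. Your version needs only the two extremal violating cells. Both proofs rely on the equal count of $n-1$ simple internal nodes, which you use implicitly when setting $z_0=a$ and $z_n=b$, and on the vanishing endpoint limit of $w$.

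\paragraph{The fallback does not close the gap.} Integrating \eqref{eq:velocity} in $t$ requires the $n$th level of $H_1+tD$, and of $H_1+tD_N$ for your truncations, to stay isolated below the essential spectrum for all $t\in[0,1]$. The theorem assumes this only for its second assertion. Assertion \eqref{eq:nodes} assumes $E_n^{(i)}<\Sigma_i$ only at $t=0$ and $t=1$, and you give no argument that the path condition follows. The paper's secondary proof avoids this issue:
\begin{itemize}
\item it exhausts $(a,b)$ by compact subintervals, where the Dirichlet spectrum is discrete and the nondecreasing $D$ is bounded;
\item it integrates the node velocity on each subinterval;
\item it passes to the limit using only the convergence of the $n$th eigenfunctions of $H_1$ and $H_2$, in $L^2$ and locally in $C^1$.
\end{itemize}

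\paragraph{Velocity formula.} Your derivation of \eqref{eq:velocity} matches the paper's, with one exception. Saying that boundary terms ``vanish by decay'' does not cover a finite Dirichlet endpoint $e$. There you need the Hardy-type argument for $W=\psi_t\dot\psi_t'-\psi_t'\dot\psi_t$: $W'\in L^1$ guarantees the endpoint limit exists, and $W/|x-e|\in L^1$ near $e$ forces it to be zero.
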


\begin{figure}[t]
\centering
\includegraphics[width=\linewidth]{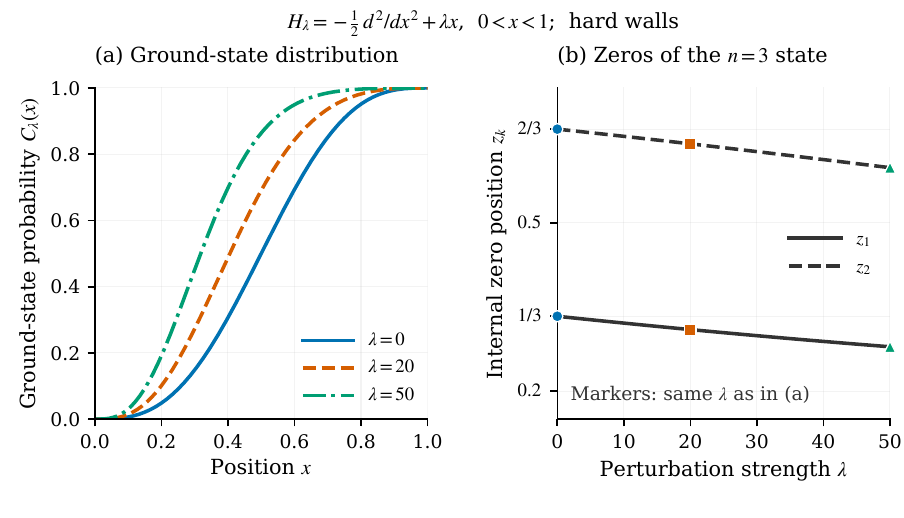}
\caption{Ground-state distributions and excited-state nodes of
$H_\lambda=-\tfrac12 \mathrm{d}^2/\mathrm{d}x^2+\lambda x$ on $(0,1)$, with hard walls and $\hbar=m=1$.
(a) The normalized ground-state cumulative probability
$C_\lambda(x)=\int_0^x|\psi_{1,\lambda}(s)|^2\,\mathrm{d}s$ is nondecreasing in $\lambda$, expressing a shift towards the left wall.
(b) The two internal zeros of the third eigenstate move left as $\lambda$ increases; boundary zeros are excluded. Colored markers match panel (a).
The distribution and node curves are numerical results checked against the Airy-function solution (appendix~\ref{app:numerics}).
The parameter $\lambda$ is a static perturbation strength.}
\label{fig:distribution-nodes}
\end{figure}

Guo and Zhang~\cite{GuoZhang2022} proved differentiability of nodes for finite-interval Sturm--Liouville problems with integrable potentials and obtained an equivalent derivative formula.
Integrating that formula and taking Dirichlet exhaustions yields \eqref{eq:nodes} under the present endpoint assumptions (appendix~\ref{app:nodes}).
We give a direct two-sided Sturm proof and use \eqref{eq:velocity} to describe node sensitivity along an isolated branch.

Write $\psi_i''+q_i\psi_i=0$, with $q_i=2m(E_n^{(i)}-V_i)/\hbar^2$.
Since $q_2-q_1$ is nonincreasing, Sturm comparison orders nodes from the left before its possible sign change and from the right after it.
The equal node counts complete the argument (appendix~\ref{app:nodes}).
Thus a uniform field in a finite well, $D=eFx$ with $eF>0$, shifts every internal node to the left, as illustrated for a tilted box in figure~\ref{fig:distribution-nodes}(b).
The parameter $t$ in \eqref{eq:velocity} is a perturbation strength, not physical time.
On a common interval symmetric about zero, for even $V_1$ and $D$ nondecreasing in $|x|$, apply the argument on the positive half-interval. Equal parity gives the same Dirichlet or Neumann condition at zero, so every corresponding positive node moves towards the origin.

Node ordering does not ensure a nonzero squared overlap for excited states.
For the unit box, $V_1=0$ and $V_2=h\mathbf1_{(c,1)}$, with $h\ge0$ and $0<c<1/2$, choose the second eigenfunctions to have positive derivative at zero.
Their overlap is $1$ at $h=0$ and tends, as $h\to\infty$, to
\begin{equation}
 -\frac{\sqrt c\sin(2\pi c)}{\pi(1-c^2)}<0.
 \label{eq:boxoverlap}
\end{equation}
For $j=1,2$, min--max bounds $E_j(h)$ by $j^2\pi^2/(2c^2)$, so the normalized states are bounded in $H_0^1(0,1)$ and their mass on $(c,1)$ tends to zero.
Compactness gives orthonormal limits supported on $(0,c)$; their eigenvalue equations and the min--max bounds identify them as its first two Dirichlet states.
The positive-derivative phase at zero fixes the sign, giving convergence of the second state.
The limiting overlap follows by integrating the normalized second box state on $(0,c)$ against $\sqrt2\sin(2\pi x)$.
Analytic perturbation for finite $h$~\cite{Kato1976} gives continuity, hence exact orthogonality at a finite $h$ despite a monotone potential difference.

\section{Yukawa--Hulth\'en comparisons for different state properties}
\label{sec:yukawa}
We choose Hulth\'en comparisons for energy and overlap, spatial observables, and the binding range in turn.
In atomic units, set
\begin{equation}
 V_{\rm Y}(r)=-\frac{\mathrm{e}^{-\delta r}}r,\qquad
 V_{\rm H}^{(\mu)}(r)=-\frac{\mu}{\mathrm{e}^{\mu r}-1},\qquad V_{\rm C}(r)=-\frac1r.
 \label{eq:potentials}
\end{equation}
The parameters $\delta,\mu>0$ are inverse screening lengths.
The Hulth\'en $s$ levels and its unnormalized ground-state radial function are
\begin{equation}
 E_n^{\rm H}(\mu)=-\frac{(1-n^2\mu/2)^2}{2n^2}\quad(n^2\mu<2),\qquad
 u_\mu(r)=\mathrm{e}^{-(1-\mu/2)r}(1-\mathrm{e}^{-\mu r}).
 \label{eq:hulthen}
\end{equation}
We use $u(r)=\sqrt{4\pi}\,r\psi(r)$ normalized in $L^2(0,\infty)$, so $p(r)=u(r)^2$ is the density of the radius, whereas $|\psi(0)|^2$ is the spatial density at the nucleus.

\subsection{Energy brackets and ground-state overlap}
For energy comparison, use the elementary inequalities
\begin{equation}
 V_{\rm H}^{(2\delta)}-c\delta\le V_{\rm Y}\le V_{\rm H}^{(2\delta)},\qquad
 c=\max_{x>0}\mathrm{e}^{-x}\left(\frac1x-\frac1{\sinh x}\right)=0.055810\ldots.
 \label{eq:hall}
\end{equation}
The right inequality is $\sinh x\ge x$ with $x=\delta r$; the left follows from the definition of $c$.
Together they bound each $s$-wave min--max value between $\mu_n^{\rm H}(2\delta)-c\delta$ and $\mu_n^{\rm H}(2\delta)$, where $\mu_n^{\rm H}=0$ when the corresponding bound level is absent.
The outside potentials share their eigenfunctions, giving a common reference state for the overlap bound.

For $0<\delta<1$, applying theorem~\ref{thm:overlap} to \eqref{eq:hall} in the $s$-wave sector gives
\begin{equation}
 F_{\rm Y,H}:=|\langle\psi_{\rm Y},\psi_{{\rm H},2\delta}\rangle|^2
 \ge1-\frac{c\delta}{g_{\rm H}},\qquad
 g_{\rm H}=\mu_2^{\rm H}(2\delta)-E_1^{\rm H}(2\delta).
 \label{eq:yukawafidelity}
\end{equation}
Here $\mu_2^{\rm H}(2\delta)=0$ for $\delta\ge1/4$.
The positive guarantee ends at $\delta_0=1+c-\sqrt{c(2+c)}\simeq0.7171$.
This bounded potential difference also permits Seelmann's stronger estimate~\cite[theorem~1.1]{Seelmann2019}.
Set $w=c\delta$, $A=H_{\rm H}^{(2\delta)}-wI$ and $B=H_{\rm Y}-A$.
Then $0\le B\le wI$ and $\|B\|=w$; the reference ground state is separated from the remaining spectrum by $g_{\rm H}$ and satisfies the convex-hull separation condition.
For $t=w/g_{\rm H}<1$, the perturbed rank-one spectral projection is the Yukawa ground-state projection, giving
\begin{equation}
 F_{\rm Y,H}\ge\frac{1+\sqrt{1-t^2}}2,\qquad t=\frac{c\delta}{g_{\rm H}}<1.
 \label{eq:yukawaseelmann}
\end{equation}
For $0<t<1$, this is strictly stronger than $1-t$ using the same $c$ and $g_{\rm H}$.
At $\delta=0.5$, the bounds are approximately $0.987382$ and $0.776760$, respectively, while the numerical overlap is $0.999759$.
Both positive-guarantee conditions require $\delta<\delta_0$; the reference ground state itself exists up to $\delta<1$.

\subsection{Radial moments and density at the nucleus}
To bound spatial observables, we choose Hulth\'en potentials whose differences from Yukawa are monotone.

\begin{corollary}\label{cor:yukawa}
For $0<\delta<2/\sqrt6$, all states in the chain
\begin{equation}
 V_{\rm C}\le V_{\rm H}^{(\delta)}\le V_{\rm Y}\le V_{\rm H}^{(\sqrt6\delta)}
 \label{eq:yukawaorder}
\end{equation}
have ground states, and each consecutive potential difference is nonincreasing in $r$.
Consequently
\begin{equation}
 p_{\rm C}\lr p_{{\rm H},\delta}\lr p_{\rm Y}\lr p_{{\rm H},\sqrt6\delta}.
 \label{eq:yukawadensities}
\end{equation}
For every $k>0$,
\begin{equation}
 \ev{r^k}_{{\rm H},\delta}\le\ev{r^k}_{\rm Y}\le\ev{r^k}_{{\rm H},\sqrt6\delta},\qquad
 \frac{4-6\delta^2}{4\pi}\le|\psi_{\rm Y}(0)|^2\le\frac{4-\delta^2}{4\pi}.
 \label{eq:momentcontact}
\end{equation}
The factor $\sqrt6$ is the smallest upper screening factor giving a nonincreasing potential difference.
\end{corollary}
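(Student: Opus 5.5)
Write $x=\delta r$ throughout, and work with the radial functions $u$ on $(0,\infty)$ with the Dirichlet condition at $r=0$. The proof has three parts: verify the pointwise chain \eqref{eq:yukawaorder} and the monotonicity of the three consecutive differences, apply the radial form of theorem~\ref{thm:ratio}, and read off moments and the contact density from the explicit Coulomb and Hulth\'en states.

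\textbf{The chain of potentials.}
\begin{itemize}
\item \emph{Coulomb versus Hulth\'en.} We have $V_{\rm C}\le V_{\rm H}^{(\delta)}$ because $\mu/(\mathrm e^{\mu r}-1)\le 1/r$. The difference $V_{\rm H}^{(\delta)}-V_{\rm C}=1/r-\delta/(\mathrm e^{\delta r}-1)$ is nonincreasing, since $1/x-1/(\mathrm e^x-1)$ decreases in $x$.
\item \emph{Hulth\'en versus Yukawa.} Write
\[
 V_{\rm Y}-V_{\rm H}^{(\delta)}=\delta\Big[\frac1{\mathrm e^{x}-1}-\frac{\mathrm e^{-x}}{x}\Big].
\]
The bracket is nonnegative, since $x\ge 1-\mathrm e^{-x}$. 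It should also be nonincreasing, which I would check by differentiating, or by expanding both terms as Laplace transforms and using a positivity argument.
\item \emph{Yukawa versus Hulth\'en with $a=\sqrt6$.} Write
\[
 V_{\rm H}^{(a\delta)}-V_{\rm Y}=\delta\Big[\frac{\mathrm e^{-x}}{x}-\frac{a}{\mathrm e^{ax}-1}\Big]=:\delta\,g_a(x).
\]
We need $g_a\ge0$ and $g_a$ nonincreasing, with $\sqrt6$ minimal.
\end{itemize}

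\textbf{Minimality of $\sqrt6$.} Expand near $x=0$:
\[
 \frac{\mathrm e^{-x}}{x}=\frac1x-1+\frac x2-\cdots,\qquad \frac{a}{\mathrm e^{ax}-1}=\frac1x-\frac a2+\frac{a^2x}{12}-\cdots,
\]
so
\[
 g_a=\Big(\frac a2-1\Big)+x\Big(\frac12-\frac{a^2}{12}\Big)+O(x^2).
\]
Monotone decrease at the origin forces $a^2\ge6$. This shows that $\sqrt6$ is the smallest admissible factor, and it also fixes the choice.

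For $a=\sqrt6$ I would prove $g_a'\le0$ on all of $(0,\infty)$. Since $g_a\to0$ at infinity, this gives $g_a\ge0$ as well. This global monotonicity is the main obstacle. It amounts to a transcendental inequality:
\[
 \frac{a^2\mathrm e^{ax}}{(\mathrm e^{ax}-1)^2}\le \mathrm e^{-x}\Big(\frac1{x^2}+\frac1x\Big).
\]
I would first try substituting $\mathrm e^{ax}=1/(\dots)$ and comparing the two sides as $a^2/(4\sinh^2(ax/2))$ against $\mathrm e^{-x}(1+x)/x^2$, using the Taylor bound near $0$ and the exponential decay rates for large $x$. Decay rates favour the right-hand side, since $a>1$. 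If that fails, I would split into $x$ small, handled by series with explicit remainder, and $x$ large, handled by crude bounds, with a rigorous interval check in the middle.

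\textbf{Existence of ground states.} Hulth\'en with parameter $\mu$ has a ground state iff $\mu<2$. The upper comparison needs $\sqrt6\,\delta<2$, which is exactly the hypothesis $\delta<2/\sqrt6$. By min--max the other, lower potentials then have negative ground energies below the threshold $0$, so their ground states are discrete. Note $2/\sqrt6<1$, consistent with the Yukawa range.

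\textbf{Ordering and observables.} Apply the radial case of theorem~\ref{thm:ratio} with $d=3$ and equal domains $R=\infty$ to each consecutive pair, in the reversed orientation. The difference from the upper to the lower potential is nonincreasing, so the lower potential's ratio over the upper's is nonincreasing in $r$. This gives \eqref{eq:yukawadensities}, and \eqref{eq:between} yields the moment bounds with $f=r^k$.

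For the contact density, $|\psi(0)|^2=u'(0)^2/(4\pi)$ with $u$ normalized. The likelihood-ratio order gives $p_{\rm Y}/p_{{\rm H},\delta}$ nondecreasing in $r$. Hence $u_{\rm Y}/u_{{\rm H},\delta}$ at $r\to0$ is at most its value elsewhere, and by a standard argument at most $1$: if the ratio exceeded $1$ at $0$ and were nondecreasing, it would exceed $1$ everywhere and contradict normalization. So $u_{\rm Y}'(0)^2\le u_{{\rm H},\delta}'(0)^2$, and symmetrically $u_{\rm Y}'(0)^2\ge u_{{\rm H},\sqrt6\delta}'(0)^2$.

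It remains to compute the Hulth\'en values. Using $u_\mu$ from \eqref{eq:hulthen} with $\beta=1-\mu/2$,
\[
 \int u_\mu^2=\frac1{2\beta}-\frac2{2\beta+\mu}+\frac1{2\beta+2\mu}=\frac{\mu^2}{2\beta(2\beta+\mu)(2\beta+2\mu)},\qquad u_\mu'(0)=\mu .
\]
Then $2\beta+\mu=2$ and $2\beta+2\mu=2+\mu$, so
\[
 4\pi|\psi(0)|^2=2\beta\cdot 2(2+\mu)=(2-\mu)(2+\mu)=4-\mu^2.
\]
With $\mu=\delta$ and $\mu=\sqrt6\,\delta$ this gives exactly the stated bounds $(4-\delta^2)/(4\pi)$ and $(4-6\delta^2)/(4\pi)$.
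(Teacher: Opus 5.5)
\textbf{There is a genuine gap.} Your architecture matches the paper's: the pointwise chain, monotone consecutive differences, the radial case of theorem~\ref{thm:ratio}, the normalization argument at $r=0$, and the explicit Hulth\'en data. Your reduction of $g_{\sqrt6}'\le0$ to $\bigl(\sinh(ax/2)/(ax/2)\bigr)^2\ge\mathrm e^{x}/(1+x)$ is exactly the paper's criterion \eqref{eq:sinhcondition}.

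The problem is that you never prove this inequality for all $x>0$; you only list strategies, ending in an unspecified numerical interval check. Likewise, you only assert that $V_{\rm Y}-V_{\rm H}^{(\delta)}$ ``should'' be nonincreasing. These two monotonicity facts are the only non-formal content of the corollary. Without them:
\begin{itemize}
\item theorem~\ref{thm:ratio} cannot be applied, so \eqref{eq:yukawadensities} and \eqref{eq:momentcontact} remain unproved;
\item your derivation of $g_{\sqrt6}\ge0$ from ``nonincreasing and tending to zero'' is unsupported;
\item only the necessity half of the minimality of $\sqrt6$ is established.
\end{itemize}
The small-$x$ regime is also more delicate than your sketch suggests. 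At $a=\sqrt6$ the two sides agree through order $x^2$ and differ by $x^3/3+O(x^4)$, so any piecewise Taylor argument needs third-order control there.

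\textbf{How the paper closes it.} The missing idea is to compare logarithmic derivatives. Set $F(x)=2\log[\sinh(\sqrt6x/2)/(\sqrt6x/2)]+\log(1+x)-x$, so $F(0)=0$ and $F'(x)=\sqrt6\,[\coth z-1/z]-x/(1+x)$ with $z=\sqrt6x/2$.
\begin{itemize}
\item The Langevin-type bound $\coth z-1/z\ge z/(3+z)$ is elementary. Multiplying by $z(3+z)(\mathrm e^{2z}-1)$ turns it into $S(z)=(2z-3)\mathrm e^{2z}+2z^2+4z+3\ge0$, which holds because $S(0)=S'(0)=S''(0)=0$ and $S'''(z)=16z\mathrm e^{2z}\ge0$.
\item This gives $F'(x)\ge x/(1+x/\sqrt6)-x/(1+x)>0$, hence $F>0$ for $x>0$.
\item For the middle difference, differentiation reduces monotonicity to $(1-\mathrm e^{-x})/x\le(1+x)^{-1/2}$. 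This follows from $1-\mathrm e^{-x}\le\log(1+x)\le x/\sqrt{1+x}$.
\end{itemize}
A Laplace-transform positivity argument is not available as such. The representing measure (unit point masses at the positive integers minus Lebesgue measure on $[1,\infty)$) has a negative part.

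\textbf{Smaller points.} The pointwise bound $V_{\rm Y}\le V_{\rm H}^{(\sqrt6\delta)}$, which is all you need for existence of the Yukawa ground state, requires no monotonicity. It follows from \eqref{eq:hall} together with the fact that $V_{\rm H}^{(\mu)}$ increases with $\mu$.

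There is also an arithmetic slip in the contact-density computation. In fact $\int_0^\infty u_\mu^2=\mu^2/(4-\mu^2)$, twice your value. Your next step, $2\beta\cdot2(2+\mu)=(2-\mu)(2+\mu)$, drops a factor $2$. The two errors cancel, so your final value $4\pi|\psi(0)|^2=4-\mu^2$ is correct.
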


Appendix~\ref{app:yukawa} proves the required monotonicity.
For $0<\mu<2$, the Hulth\'en moments are explicit:
\begin{equation}
 \ev{r^k}_{{\rm H},\mu}=
 \Gamma(k+1)\frac{a_0^{k+1}-2a_1^{k+1}+a_2^{k+1}}{a_0-2a_1+a_2},
 \quad a_j=(2-\mu+j\mu)^{-1},\quad k>0.
 \label{eq:momentformula}
\end{equation}
In particular, $\ev r_{{\rm H},\mu}=(2-\mu)^{-1}+1/2+(2+\mu)^{-1}$ and
$|\psi_{{\rm H},\mu}(0)|^2=(4-\mu^2)/(4\pi)$.
The moment inequalities reverse for $-3<k<0$; the moments there are defined by the convergent radial integral, avoiding the removable singularities of the separated Gamma terms at $k=-1,-2$.
For stronger screening, $p_{{\rm H},\delta}\lr p_{\rm Y}$ remains valid whenever the Yukawa ground state exists, since the lower Hulth\'en state must then exist.
This retains the positive-moment lower bounds and the contact-density upper bound; the lower contact bound is then simply zero.

In figure~\ref{fig:yukawa-bounds}, the upper mean-radius bound grows as $\delta\uparrow2/\sqrt6$ because the upper reference state approaches zero binding and develops a long tail.
The cutoff $\delta_0$ marks loss of the overlap guarantees, whereas $2/\sqrt6$ marks the end of the two-sided spatial comparison; neither is the Yukawa binding threshold.

\begin{figure}[t]
\centering
\includegraphics[width=\linewidth]{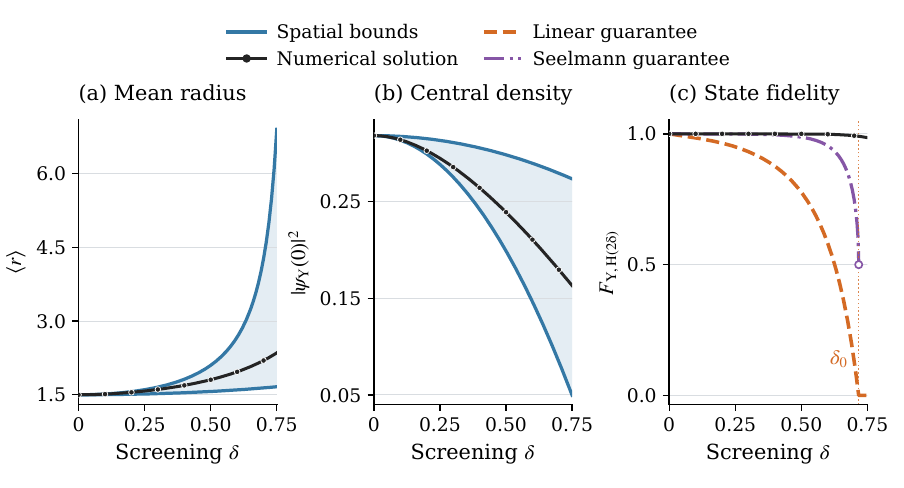}
\caption{Yukawa ground states in atomic units.
(a) Mean radius and (b) spatial density at the nucleus, with bounds \eqref{eq:momentcontact} from Hulth\'en screenings $\delta$ and $\sqrt6\delta$.
(c) Squared overlap with the Hulth\'en state at screening $2\delta$: numerical values (black), the linear lower bound \eqref{eq:yukawafidelity} truncated at zero (orange), and Seelmann's bound \eqref{eq:yukawaseelmann} (purple).
The dotted line marks $\delta_0\simeq0.7171$; the open endpoint excludes $t=1$ from Seelmann's condition.
The range $0\le\delta\le0.75$ lies below $2/\sqrt6$, with $\delta=0$ the Coulomb limit.
Black curves are wavefunction calculations; colored bounds are analytic.}
\label{fig:yukawa-bounds}
\end{figure}

\subsection{Optimizing the guaranteed binding range}
Define $\delta_c(n,0)$ as the supremum of screening parameters for which the $n$th Yukawa $s$ state is bound.
The upper potential in \eqref{eq:hall} gives $n^2\delta_c(n,0)\ge1$.
Varying both Hulth\'en strength and screening enlarges this guaranteed interval.

\begin{proposition}\label{prop:critical}
Let $m_*\in(1,2)$ be the root of $(2-m)\mathrm{e}^m=2$, and let $b_*=[\mathrm{e}(2-m_*)]^{-1}$.
Then $V_{\rm Y}\le b_*V_{\rm H}^{(m_*\delta)}$ and
\begin{equation}
 n^2\delta_c(n,0)\ge\frac{2b_*}{m_*}=1.13611\ldots.
 \label{eq:critical}
\end{equation}
This is the largest existence bound obtained from any pointwise majorant $bV_{\rm H}^{(m\delta)}$ with $b,m>0$, attained only at $(m,b)=(m_*,b_*)$.
\end{proposition}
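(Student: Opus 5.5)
The plan is to reduce the binding question to a one-variable optimization via min--max, the $s$-wave form order, and the exact Hulth\'en spectrum. I will then solve the resulting max--min problem by an envelope argument.

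\emph{Step 1 (scaled Hulth\'en spectrum).} For $b,\mu>0$, the potential $bV_{\rm H}^{(\mu)}=-b\mu/(\mathrm e^{\mu r}-1)$ is again exactly solvable in the $s$-wave sector. Either the standard hypergeometric solution or a rescaling of \eqref{eq:hulthen} gives
\[
E_n=-\frac{(2b-n^2\mu)^2}{8n^2b},
\]
and this is a genuine bound level precisely when $n^2\mu<2b$; for $b=1$ it reduces to \eqref{eq:hulthen}.

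\emph{Step 2 (transfer to Yukawa).} Suppose $V_{\rm Y}\le bV_{\rm H}^{(m\delta)}$ pointwise. The form order and min--max in the $s$-wave sector give $\mu_n^{\rm Y}\le\mu_n^{bH}<0=\Sigma$ whenever $n^2m\delta<2b$, as in the discussion after \eqref{eq:hall}. Hence the $n$th Yukawa $s$ state is bound for all $\delta<2b/(mn^2)$, so $n^2\delta_c(n,0)\ge 2b/m$.

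\emph{Step 3 (admissible $b$ for given $m$).} With $x=\delta r$, the majorant condition reads
\[
b\le\beta(m):=\inf_{x>0}f_m(x),\qquad f_m(x)=\frac{(\mathrm e^{mx}-1)\,\mathrm e^{-x}}{m x}.
\]
For $m\le1$ we have $f_m(x)\to0$ as $x\to\infty$, so $\beta(m)=0$ and no bound results. For $m>1$, $f_m(x)\to1$ as $x\to0$ and $f_m(x)\to\infty$ as $x\to\infty$. I will show $f_m$ has a unique interior critical point $x_m$ when it dips below $1$; otherwise the infimum is the boundary value $1$. At an interior minimum, the critical-point equation $(m-1)x\,\mathrm e^{mx}+x=\mathrm e^{mx}-1$ lets me rewrite $\beta(m)$ in closed form.

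\emph{Step 4 (maximize $G(m)=2\beta(m)/m$).} The best existence bound from this family is $\sup_m G(m)$, and any larger $b$ violates the majorant. I will use the envelope theorem:
\[
\frac{dG}{dm}=\frac{2}{m}\Bigl(\partial_m f_m(x_m)-\frac{f_m(x_m)}{m}\Bigr),
\]
and set this to zero. Combined with the critical-point equation of Step 3, this should force the specific relation $m x_m=m$, i.e.\ $x_*=1$. That choice makes the critical-point equation collapse to $(2-m)\mathrm e^m=2$. I will then check that $b_*=f_{m_*}(1)=(\mathrm e^{m_*}-1)/(\mathrm e\,m_*)=[\mathrm e(2-m_*)]^{-1}$ by substituting $\mathrm e^{m_*}=2/(2-m_*)$. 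Finally I will evaluate $2b_*/m_*\approx1.13611$ numerically.

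\emph{Uniqueness of the optimum.} I will show $G'$ changes sign exactly once, from positive to negative. Near $m=1$ and in the regime where the infimum is the boundary value, $G=2/m$ or smaller. For large $m$, the factor $1/m$ dominates.

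\emph{Main obstacle.} The hard step is the max--min analysis of Steps 3 and 4. It requires showing that the interior minimizer is unique and smooth in $m$, so that the envelope theorem applies. It also requires showing that the coupled system has the single solution $(x_*,m_*)=(1,m_*)$ in $m\in(1,2)$, and that the boundary regime $\beta=1$ never beats it. I would first try to eliminate $x$ explicitly: verify $x_*=1$ by direct substitution, then prove monotonicity of $G$ on each side using the sign of $\partial_m f_m-f_m/m$ along the curve $x_m$.
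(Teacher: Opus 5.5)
Your proposal is correct. Steps 1--3 coincide with the paper: the threshold $n^2\mu<2b$ for $bV_{\rm H}^{(\mu)}$, the min--max transfer, and the reduction to $b\le\beta(m)=\inf_{x>0}f_m(x)$.

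There is one harmless slip. The $n$th level of $bV_{\rm H}^{(\mu)}$ is $-(2b-n^2\mu)^2/(8n^2)$, without the extra $b$ in the denominator (rescale $r=s/b$). Only the threshold is used, so nothing breaks.

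The real difference is in the optimization. Your envelope argument works: combining $\partial_m\log f_m=1/m$ with $\partial_x\log f_m=0$ gives $mx\mathrm{e}^{mx}/(\mathrm{e}^{mx}-1)=2=1+x$, hence $x_*=1$ and $(2-m)\mathrm{e}^m=2$. This nicely explains where $m_*$ and the minimizer come from. To make it rigorous you need four things.
\begin{enumerate}
\item Strict convexity of $\log f_m$ in $x$ (second derivative $x^{-2}[1-(mx/2)^2/\sinh^2(mx/2)]>0$). This gives a unique minimizer, interior exactly when $m<2$ since $f_m(x)=1+(m/2-1)x+O(x^2)$, and smooth in $m$ by the implicit function theorem.
\item Strict monotonicity of $(2-m)\mathrm{e}^m$ on $(1,2)$, so that $m_*$ is the only critical point of $G$.
\item Control as $m\uparrow2$ and for $m\ge2$: there $\beta\le1$ gives $G\le2/m$, so $\limsup G\le1$.
\item Control as $m\downarrow1$, where $G\le2/m$ is useless because it exceeds $1.136$. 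You need $G\to0$, e.g.\ from $\beta(m)\le f_m(1/(m-1))\le\mathrm{e}(m-1)/m$.
\end{enumerate}

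The paper sidesteps all regularity and case analysis with a single test point. Evaluating $f_m$ at $x=m_*/m$ fixes $mx=m_*$, so $2\beta(m)/m\le 2(\mathrm{e}^{m_*}-1)m_*^{-2}\,y\mathrm{e}^{-y}$ with $y=m_*/m$. Then $y\mathrm{e}^{-y}\le\mathrm{e}^{-1}$ gives the bound $2b_*/m_*$ for every $m>0$ at once, with equality only at $m=m_*$. Log-convexity is needed only at $m_*$, to certify $\beta(m_*)=b_*$.

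Your route discovers the optimum; the paper's verifies it in two lines and gets uniqueness for free.
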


The proof in appendix~\ref{app:critical} establishes the optimum analytically.
The improvement over $1$ applies to every $s$-level index $n\ge1$.
Brau and Calogero's counting bound~\cite{BrauCalogero2003}, with $g^2=2/\delta$, gives stronger existence guarantees at sufficiently large $n$, approaching the coefficient $4/\pi$.
That bound includes finite-$n$ corrections; \eqref{eq:critical} gives one explicit coefficient for all $n$, optimal within the stated comparison family.

\section{Conclusion}
The state property sought determines how to choose comparison potentials.
Comparison spectra and gaps control overlaps, whereas spatial order requires structure in the potential differences; ordered excited-state nodes can coexist with orthogonality.
For Yukawa ground states, this distinction leads to separate Hulth\'en choices for overlaps, radial observables and binding.
The bounded overlap comparison admits Seelmann's stronger guarantee, while optimizing strength and screening gives the largest binding range within the scaled Hulth\'en family.

\section*{Acknowledgments}
This work is supported by the National Natural Science Foundation of China (Grants 12475020 and 92565111), Quantum Science and Technology--National Science and Technology Major Project (2021ZD0301701), and the National Key Research and Development Program of China (2023YFC2205802).
\section*{Conflict of interest}
The authors declare no competing interests.
\section*{Data availability}
The numerical verification code is available on GitHub~\cite{VerificationCode}. The figure-reproduction code and sampled data are available from the corresponding authors upon reasonable request. No experimental data were generated.

\appendix
\section{Spectral-subspace estimate}\label{app:overlap}
Let $Q$ project onto the first $n$ eigenstates of $H_1$, and let $P$ be any rank-$n$ projection spanned by orthonormal form-domain vectors $\phi_j$.
Writing $M_{jj}=\langle\psi_j^{(1)},P\psi_j^{(1)}\rangle\in[0,1]$ and $\tau=\tr(PQ)$, spectral decomposition gives
\begin{equation}
 \sum_{j=1}^n h_1[\phi_j]\ge\sum_{j=1}^n\mu_j^{(1)}M_{jj}
 +\mu_{n+1}^{(1)}(n-\tau).
\end{equation}
Since $\sum_j\mu_j^{(1)}(1-M_{jj})\le\mu_n^{(1)}(n-\tau)$, taking $P=P_n^{(2)}$ and using $\sum_jh_1[\psi_j^{(2)}]\le S_n^{(3)}$ proves \eqref{eq:subspace}.
For a chosen $n$th eigenvector $\psi$ of $H_2$, choose its ordered eigenbasis so that $\psi\in\operatorname{Ran}P_n^{(2)}$ and $\psi\perp\operatorname{Ran}P_{n-1}^{(2)}$, also when the target level is degenerate.
Then
$\|(1-P_n^{(1)})\psi\|^2\le\xi_n$ and
$\|P_{n-1}^{(1)}\psi\|^2\le\xi_{n-1}$, by the two trace estimates.
Subtracting these squared norms from one gives \eqref{eq:excitedoverlap}.

\section{Endpoint argument for the ground-state ratio}\label{app:ratio}
By \eqref{eq:wronskian}, $w$ decreases and then increases, so its endpoint limits exist.
At an infinite endpoint, $w\in L^1$ by Cauchy--Schwarz and $\psi_i\in H^1$; its limit is therefore zero.
At a common finite Dirichlet endpoint $e$, Hardy's inequality gives $\int |w(x)|/|x-e|\,\mathrm{d}x<\infty$ locally near $e$, again forcing the limit to vanish.
At a shortened right endpoint $c<b$, a positive limit would make the positive ratio $\psi_2/\psi_1$ increase while tending to zero, which is impossible.
Thus both endpoint limits are nonpositive, and $w\le0$ throughout.
The constant-ratio case follows from $w'=0$ and positivity.
On the symmetric domains specified in the theorem, the positive ground states are even, and $w(0)=0$ replaces the left-end condition.

For radial profiles use
$w=r^{d-1}(\phi_1\phi_2'-\phi_1'\phi_2)$.
Its derivative is \eqref{eq:wronskian} multiplied by $r^{d-1}$.
At zero, Hardy's inequality gives $\liminf|w|=0$ for $d\ge3$.
For $d=2$, set $\eta(r_0)^2=\max_i\int_0^{r_0}|\phi_i'|^2r\,\mathrm{d}r\to0$.
Cauchy--Schwarz gives
$|\phi_i(r)|\le|\phi_i(r_0)|+\eta(r_0)\sqrt{\log(r_0/r)}$.
If $|w|\ge\gamma>0$ near zero, integration and the same inequality give
\[
\gamma\log(r_0/\varepsilon)\le2\eta(r_0)^2\log(r_0/\varepsilon)[1+o(1)]
\]
as $\varepsilon\to0$, a contradiction for small $r_0$.
The existing one-sided limit is consequently zero.
At the outer endpoint the previous argument applies with the radial weight.
The radial centrifugal terms also cancel, so the same proof works for the lowest state in a fixed angular-momentum sector.

\section{Coordinatewise order from imaginary time}\label{app:lattice}
Adding constants allows $V_1,D\ge0$.
Let $g(x)=\mathrm{e}^{-|x|^2}$, $\tau=\beta/N$, and $p_\tau$ be the kernel of $\mathrm{e}^{-\tau T}$.
For $z=(x_{-N},\ldots,x_N)$ define
\begin{equation}
 \Phi_i(z)=g(x_{-N})g(x_N)
 \prod_{k=-N}^{N-1}p_\tau(x_k,x_{k+1})
 \prod_{k=-N}^{N}\mathrm{e}^{-\tau c_kV_i(x_k)},
 \label{eq:pathdensity}
\end{equation}
Here $c_{\pm N}=1/2$ and $c_k=1$ for $|k|<N$.
On $\R^d$ the Gaussian heat kernel is MTP$_2$ in its two arguments.
On a box it is a product of totally positive one-dimensional Dirichlet kernels~\cite{KarlinMcGregor1959}.
Submodularity makes each potential factor in $\Phi_1$ MTP$_2$.
Products and marginals preserve this property~\cite{KarlinRinott1980}.
Writing $K_i=\mathrm{e}^{-\tau V_i/2}\mathrm{e}^{-\tau T}\mathrm{e}^{-\tau V_i/2}$ and $u_i=K_i^Ng$, the marginal at $x_0=x$ is $m_i(x)=u_i(x)^2$.

For $x\le x'$, the conditional densities of the remaining coordinates under $\Phi_1$ satisfy
\[
 f_{x'}(y\vee y')f_x(y\wedge y')\ge f_{x'}(y)f_x(y'),
\]
by MTP$_2$, since the conditional normalization factors cancel.
Expectations of decreasing functions are therefore smaller under $f_{x'}$~\cite{KarlinRinott1980}.
The factor $G=\Phi_2/\Phi_1=\exp(-\tau\sum c_kD(x_k))$ is decreasing in every coordinate.
It follows that
\begin{equation}
 \begin{aligned}
 \frac{m_2(x')}{m_1(x')}&=\int G(x',y)f_{x'}(y)\,\mathrm{d}y
 \le\int G(x,y)f_{x'}(y)\,\mathrm{d}y\\
 &\le\int G(x,y)f_x(y)\,\mathrm{d}y=\frac{m_2(x)}{m_1(x)}.
 \end{aligned}
\end{equation}
Hence $u_2/u_1$ is decreasing and $u_1$ is MTP$_2$.
Kato's product formula for form sums~\cite[theorem~18.2]{Simon2019Kato} gives
$Q_N=(B_NA_N^2)^N\to \mathrm{e}^{-\beta H_i}$ strongly, where $A_N=\mathrm{e}^{-\tau V_i/2}$ and $B_N=\mathrm{e}^{-\tau T}$.
For the symmetric product, write $Y_N=(B_NA_N^2)^{N-1}B_NA_N$, so $K_i^N=A_NY_N$, $Q_N=Y_NA_N$.
The contraction bounds give
$\|(K_i^N-Q_N)f\|\le\|(A_N-I)Q_Nf\|+\|(A_N-A_N^2)f\|\to0$.
For the first term, put $q=\mathrm{e}^{-\beta H_i}f$ and split
$(A_N-I)Q_Nf=(A_N-I)(Q_Nf-q)+(A_N-I)q$.
Both terms vanish by strong convergence and $\|A_N-I\|\le2$.
Hence $u_i\to \mathrm{e}^{-\beta H_i}g$ in $L^2$.
Normalization and $\beta\to\infty$ then give the positive ground states, since their eigenvalues are isolated and $\langle g,\psi_i\rangle>0$.
Almost-everywhere convergent subsequences preserve the ratio inequality for almost every strictly comparable pair and MTP$_2$ for almost every pair: the meet--join map permutes coordinates on each order region.
Continuity and positivity extend both inequalities to all pairs, including equal coordinates~\cite{Simon1982}.
Finally, $p_1$ MTP$_2$ and $p_2/p_1$ decreasing imply
$p_2(x)p_1(y)\le p_2(x\wedge y)p_1(x\vee y)$.
Another application of the same comparison inequality proves the expectation statement.

\section{Node comparison and differentiation}\label{app:nodes}
Put $q_i=2m(E_n^{(i)}-V_i)/\hbar^2$.
For consecutive zeros $\alpha<\beta$ of $\psi_1$, assume $q_2\ge q_1$ between them.
If $\psi_2$ has no zero in $(\alpha,\beta]$, choose both functions positive there.
Then $w'=\psi_1\psi_2(q_1-q_2)\le0$, but
$w(\beta)=-\psi_1'(\beta)\psi_2(\beta)>0$ and $w(\alpha^+)\le0$, a contradiction.
The endpoint case uses appendix~\ref{app:ratio}.
Reversing the interval shows that when $q_1\ge q_2$ between consecutive zeros of $\psi_2$, $\psi_1$ has a zero in $[\alpha,\beta)$.

For these one-dimensional Dirichlet--Friedrichs realizations, Sturm oscillation gives simple eigenvalues and exactly $n-1$ internal nodes for the $n$th bound state~\cite{Simon2005}.

Choose $x_*$ so that $q_2\ge q_1$ to its left and $q_2\le q_1$ to its right.
If $z_k^{(1)}\le x_*$, the $k$ disjoint intervals $(z_{j-1}^{(1)},z_j^{(1)}]$ each contain a node of $\psi_2$, so $z_k^{(2)}\le z_k^{(1)}$.
If $z_k^{(2)}\ge x_*$, the $n-k$ intervals $[z_j^{(2)},z_{j+1}^{(2)})$ each contain a node of $\psi_1$, giving the same conclusion.
Here $z_0^{(1)}=a$ and $z_n^{(2)}=b$; otherwise $z_k^{(2)}<x_*<z_k^{(1)}$ directly.

The finite-interval derivative formula also gives an exhaustion proof of the order.
On each compact interior interval, $V_i\in L^1$ and $D$ is bounded, so integration in $t$ orders the nodes~\cite{GuoZhang2022}.
Form-core approximation and min--max give spectral convergence for $H_1$ and $H_2$; isolation, simplicity and the local equation give normalized eigenfunction convergence in $L^2$ and locally in $C^1$.
The $n-1$ simple nodes therefore converge in order, using only the bound states of $H_1$ and $H_2$.

For bounded $D$, an isolated simple eigenfunction is analytic in the common operator graph norm~\cite{Kato1976}, hence in $H^1_0$ and locally in $C^1$ by the form bound and local equation.
Thus $E_n'(t)=\ev D_{\psi_t}$ and, with a dot denoting $\partial_t$, $W=\psi_t\dot\psi_t'-\psi_t'\dot\psi_t$ satisfies
$W'=(2m/\hbar^2)(D-E_n')\psi_t^2\in L^1$.
Its endpoint limits exist and vanish: Cauchy--Schwarz gives $W\in L^1$ at infinity, and Hardy's inequality gives $W/|x-e|\in L^1$ near a finite Dirichlet endpoint $e$.
In particular, $W(a^+)=0$.
Differentiating the node equation gives
$z_k'=-\dot\psi_t(z_k)/\psi_t'(z_k)=W(z_k)/\psi_t'(z_k)^2$.
This proves \eqref{eq:velocity}.
Its integral is nonpositive: $D-\ev D$ is nondecreasing, and its integral against $\psi_t^2$ over the whole interval is zero.

\section{Monotone Yukawa--Hulth\'en differences}\label{app:yukawa}
Write $x=\delta r$.
The first two differences in \eqref{eq:yukawaorder}, divided by $\delta$, are
$x^{-1}-(\mathrm{e}^x-1)^{-1}$ and $(\mathrm{e}^x-1)^{-1}-\mathrm{e}^{-x}/x$.
They are nonnegative and nonincreasing: differentiation reduces the first claim to $\sinh(x/2)\ge x/2$, and the second to
$(1-\mathrm{e}^{-x})/x\le(1+x)^{-1/2}$.
The latter follows from $1-\mathrm{e}^{-x}\le\log(1+x)\le x/\sqrt{1+x}$.

For $m\ge2$, $V_{\rm Y}\le V_{\rm H}^{(m\delta)}$ follows from
$\mathrm{e}^{(m-1)x}-\mathrm{e}^{-x}-mx\ge0$, whose value and first derivative vanish at zero and whose second derivative is positive.
The difference is nonincreasing exactly when
\begin{equation}
 \left(\frac{\sinh(mx/2)}{mx/2}\right)^2\ge\frac{\mathrm{e}^x}{1+x}.
 \label{eq:sinhcondition}
\end{equation}
Expansion at zero requires $m^2/12\ge1/2$, so $m\ge\sqrt6$.
For sufficiency, use
\begin{equation}
 \coth z-\frac1z\ge\frac{z}{3+z},\qquad z>0.
 \label{eq:langevin}
\end{equation}
Multiplication by $z(3+z)(\mathrm{e}^{2z}-1)>0$ reduces this inequality to
$S(z)=(2z-3)\mathrm{e}^{2z}+2z^2+4z+3\ge0$.
It follows from $S(0)=S'(0)=S''(0)=0$ and $S'''(z)=16z\mathrm{e}^{2z}\ge0$.
At $m=\sqrt6$, let
$F(x)=2\log[\sinh(mx/2)/(mx/2)]+\log(1+x)-x$.
Then $F(0)=0$ and \eqref{eq:langevin} gives
\begin{equation}
 F'(x)\ge\frac{x}{1+x/\sqrt6}-\frac{x}{1+x}>0\quad(x>0).
\end{equation}
This proves \eqref{eq:sinhcondition} for all $x>0$.
Theorem~\ref{thm:ratio} gives \eqref{eq:yukawadensities}.
A decreasing ratio of normalized radial densities starts at a value at least one; its limit at zero is the ratio of $u'(0)^2=4\pi|\psi(0)|^2$.
Normalization of \eqref{eq:hulthen} and radial integration then give \eqref{eq:momentcontact} and \eqref{eq:momentformula}.

\section{Global optimization of the binding criterion}\label{app:critical}
For $bV_{\rm H}^{(\mu)}$, put $y=\mathrm{e}^{-\mu r}$ and $u=y^a(1-y)f(y)$ with $a=\sqrt{-2E}/\mu$.
The radial equation becomes
\begin{equation}
 y(1-y)f''+[(2a+1)-(2a+3)y]f'+[2b/\mu-(2a+1)]f=0.
\end{equation}
Regularity at $r=0$ forces the decaying hypergeometric solution to terminate at degree $n-1$, giving $2b/\mu=n(2a+n)$, with $a>0$ for square integrability.
The resulting Jacobi polynomial has $n-1$ internal zeros, identifying the $n$th state by Sturm oscillation.
Thus $E_n=-(b-n^2\mu/2)^2/(2n^2)$ exists exactly when $n^2\mu<2b$.
At $n=1$, this also gives \eqref{eq:hulthen} directly.

The majorant condition is $b\le\beta(m)$, where
\begin{equation}
 \beta(m)=\inf_{x>0}\frac{\mathrm{e}^{-x}(\mathrm{e}^{mx}-1)}{mx}.
\end{equation}
For fixed $m$, the logarithm of the expression being minimized has second derivative
$x^{-2}[1-(mx/2)^2/\sinh^2(mx/2)]>0$.
At $m=m_*$ its derivative vanishes at $x=1$, so $\beta(m_*)=b_*$.
For every $m>0$, testing the infimum at $x=m_*/m$ gives
\begin{equation}
 \frac{2\beta(m)}m\le
 \frac{2(\mathrm{e}^{m_*}-1)}{m_*m}\mathrm{e}^{-m_*/m}
 \le\frac{2(\mathrm{e}^{m_*}-1)}{\mathrm{e} m_*^2}=\frac{2b_*}{m_*},
\end{equation}
because $y \mathrm{e}^{-y}\le \mathrm{e}^{-1}$.
Equality holds at $m=m_*$, proving global optimality.
If a feasible pair attains the optimal value of $2b/m$, then $m=m_*$ and $b=\beta(m_*)=b_*$, so the maximizing pair is unique.
The $n$th $s$ level of $bV_{\rm H}^{(m\delta)}$ exists exactly when $n^2m\delta<2b$; min--max then proves \eqref{eq:critical}.

\section{Numerical checks}\label{app:numerics}
The Yukawa calculations compare 40- and 50-function Slater Rayleigh--Ritz bases with radial finite differences and Richardson extrapolation.
Figure~\ref{fig:distribution-nodes} is checked against the tilted box's Airy-function solution at $\lambda=20,50$.
Figure~\ref{fig:yukawa-bounds} uses normalized radial states at 152 screening values, with larger-cutoff and Slater-basis checks at $\delta=0.1,0.2,0.5,0.75$; its squared overlap is computed independently of the analytic bound.
Basis and mesh differences assess convergence, not rigorous continuum error.
The public code supplies reference values, comparison constants and exact oscillator checks.
\section{Multiplicity and the ranking of distinct energies}\label{app:distinct}
\begin{example}[Distinct energies]\label{ex:distinct}
In units $\hbar=m=1$, take $V_1=(x^2+y^2)/2$, $V_2=V_1+\delta x^2$ and $V_3=(1/2+\delta)(x^2+y^2)$.
For $0<\delta<8/9$, the third different energy of $H_2$ is $(1+3\sqrt{1+2\delta})/2<3$, whereas that of $H_1$ is $3$.
For the reverse discrepancy, take $V_1=(x^2+y^2)/2-\delta x^2$, $V_2=(x^2+y^2)/2$, and $V_3=(x^2+y^2)/2+\delta x^2$.
At $\delta=0.3$, the middle operator's third different energy exceeds that of the upper operator despite the potential order.
These discrepancies concern ranks after removing multiplicities; the min--max order remains valid.
\end{example}

\end{document}